\newcommand\mut[1]{\ignorespaces}
\newtheorem{teo}{Theorem}
\newtheorem{lema}{Lemma}
\newtheorem{defi}{Definition}
\newtheorem{coro}{Corollary}
\renewcommand\paragraph[1]{\subsection*{#1}}
\newcommand\N{\mathbb N}
\newcommand\F{\mathbb F}
\title{BCMCC \\ Exercises 3}
\newcommand{\cu}{\ensuremath{\mathbf{u}}}
\newcommand{\cv}{\ensuremath{\mathbf{v}}}
\newcommand{\vs}{\ensuremath{{\rm{I\!F}}_q^n}}
\begin{document}

\title[Separating properties in Reed-Somonon codes]{A Study of the Separating Property in Reed-Solomon Codes by Bounding the Minimum Distance}

\author{Marcel Fernandez \and Jorge J. Urroz}

\email{marcel@entel.upc.edu, jorge.urroz@upc.edu}

\maketitle

\begin{abstract}
According to their strength, the tracing properties of a code can be categorized as frameproof, separating, IPP and TA.	It is known that if the minimum distance of the code is larger than a certain threshold then the TA property implies the rest.  Silverberg et al. ask if there is some kind of  tracing capability left  when the minimum distance  
falls below the  threshold. 
Under different assumptions, several papers have given a negative answer to the question.
In this paper further progress is made. We establish values of the minimum distance for which  Reed-Solomon codes do not posses the separating property.

\keywords{Reed-Solomon codes \and IPP codes \and Separating codes}
\end{abstract}

\section{Introduction.}

As a motivation for our work, consider the distribution of digital goods. In the trade of digital content, safe guarding ownership rights is certainly a critical issue. A way to protect copyright consists of making each  copy of the content  unique. This is done by 
embedding a different mark in each delivered item. These hidden marks are typically  strings of symbols. However,
since now all objects are different, traitor users can get together and by comparing their copies, they 
create a new copy that tries to disguise their identities. This is known as a collusion attack and the newly  created copy  is usually called a 
pirate copy.

A way to deal with collusion attacks is by taking the embedded symbol strings to be the code words
of a code with tracing properties. There is a large literature about codes possessing different degrees of robustness against collusion attacks. Let us give a brief overview. Formal definitions will be done in subsequent sections. In a $c$-frameproof code~\cite{bs:csffddit}, a coalition of at most $c$ users can not create a pirate copy that contains the code word of  another user not in the coalition. In $c$-secure frameproof codes two disjoint coalitions of at most $c$ users can not create the
same pirate copy. It has been shown~\cite{mfk:asassfcoqa}, that the secure frameproof property is the same as the separating property~\cite{sagalovich1994trans}.
Codes with the Identifiable Parent Property (IPP) were introduced in~\cite{holl:ipp}. Informally, a code has the $c$-IPP property if
all coalitions of at most $c$ traitors that can generate the same pirate copy
have a
non-empty intersection, i.e. have a common user.
The IPP  has received considerable attention in the recent
years, having been studied by several
authors~\cite{bcekz:ippmp,bk:acoippcwei,tvtsm:ncfippc,cfjlm:cwtippfmf,gckm:perfpis}.
An even stronger property is the Traceability  property ($c$-TA). In this
case, it is guaranteed that the ``closest''  authorized copy to a given pirate copy belongs to
one of the traitors. Sufficient conditions for a  code to be a $c$-TA code are stated in \cite{staddon2001combinatorial}. 

The  work in~\cite{ssw:taipp} discusses efficient algorithms
for the  identification of traitors in schemes that use $c$-TA codes.
Let $M$ denote the size of the code.  For TA
codes,  tracing is an $O(M)$ process,
  whereas   for IPP codes tracing is more
expensive since it is an $O(\binom{M}{c})$ process.
Being the TA property stronger than the IPP, but being tracing  more
costly for the IPP, it seems reasonable to expect that by relaxing the TA
requirements one
is left with a code that, even though is no longer $c$-TA, still
possesses IPP. 
In this regard,  Silverberg et al.  asked the
following question:\\

\textit{Question 1}~\cite{ssw:taipp}: Is it the case that all $c$-IPP
Reed-Solomon codes are also $c$-TA?
\\

Although intuition might lead us to give a negative answer, in that same paper the
authors used truncated Reed-Solomon codes to  credit the exact opposite, that
is, if a Reed-Solomon code does not have the TA property then it neither has
the IPP.
Later, the work in~\cite{mfs:otrbttporsc} not only reinforced this conjecture, but proved
a stronger fact,  a Reed-Solomon code that is not $c$-TA  it is neither
$c$-secure-frameproof. Therefore, they generalized the above question to the following one:
\\

\textit{Question 2}~\cite{mfs:otrbttporsc}: Is it the case that  all $c$-SEP Reed-Solomon codes are also $c$-TA?
\\

In this paper, we supplement more evidence to this last question. The results we present will hopefully 
contribute to a  complete
understanding of the tracing properties in Reed-Solomon codes.

\section{Definitions and previous results.}
\label{sec:dapr}
Let $q$ be a prime power $p^s$ and let $\F_q$ denote the finite field with $q$ elements. 
Denoting by  $\F_q^n$ the set of all $n$-tuples with elements from $\F_q$. We define a linear code of length $n$ to be a vector subspace of $\F_q^n$. 
Then, $\F_q$ s is called the
{\it code alphabet}, and the $n$-vectors in the code are called 
{\it code words}.
The dimension of the code
is defined as the dimension of the vector subspace.
Let $\cu ,\cv \in \vs$ be two words, then the {\it Hamming distance} ${\it
d}(\cu , \cv)$ between $\cu$ and $\cv$
is the number of
positions where $\cu$ and $\cv$ differ.
The {\it minimum distance} $d$, is defined
as the smallest distance between
two different code words.
A linear code with length $n$, dimension $k$ and minimum distance $d$ is
denoted as a $[n,k,d]$-code.

Reed-Solomon codes can be defined as follows.
Let $\F_q[x]$ be the ring of polynomials over $\F_q$. Take all polynomials 
of degree less than $k$, $\F_q[x]_{k-1}\subset \F_q[x]$.
Let $\alpha$ be a primitive element of  $\F_q$, so we have 
$\{1, \alpha, \alpha^2, \ldots,\alpha^{q-2}\} =\F_q\setminus\{0\}$.
\begin{defi}
A Reed-Solomon, $RS[n,k]_q$, code is defined as the
vector subspace of  $\F_q^n$ determined by all vectors of the form
$$
\mathbf{v}=(f(0), f(1), f(\alpha),\ldots,f(\alpha^{q-2}))
$$
where $f\in \F_q[x]_{k-1}$. Note that $n = q$.
\end{defi}

As in the previous definition, throughout the paper, and probably with
an slight abuse of notation we will denote polynomials with an italic
lowercase letter.

Reed-Solomon codes are maximum distance separable (MDS)~\cite{mws:ttoecc}. That means they attain the Singleton bound with equality $d = n - k + 1$.

\subsection{Definitions about codes with tracing properties.}
\label{sec:baprocitc}
Let $C$ be an $[n,k,d]$ code over $\F_q$, let $T=\{\mathbf{t}^1,\ldots, \mathbf{t}^c\}\subseteq C$ with
$\mathbf{t}^i=(t_1^i,\ldots,t_n^i)$ be
a subset of size $c$. Also, let $T|_i = \{t_i^j | \mathbf{t}^j \in T \}$. The
\emph{descendant set}
of
$T$, is defined
as
$$
desc(T)=\left \{ \mathbf{z}=(z_1,\ldots,z_n) \in \F_q^n | z_i\in  T|_i\}, 1 \leq i \leq n \right \}.
$$

\begin{defi}
	\label{def:sep}
	A code $C$, defined  over $\F_q$, has the $c$-separating
	property (denoted $c$-SEP), $c>0$, if for any two disjoint subsets of 
	$C$, $U=(\mathbf{u}^1,\ldots, \mathbf{u}^c)$ and 
	$V=(\mathbf{v}^1,\ldots, \mathbf{v}^c)$,  we have
	$$
	U|_i  \cap V|_i = \emptyset \ \ \mathrm{for} \ \ 1 \leq i \leq n.
	$$
	
\end{defi}

In the introduction we used the name secure frame-proof for the separating property.

\begin{defi}
	\label{def:IPP}
	A code $C$, defined  over $\F_q$, has the $c$-Identifiable Parent
	Property (denoted $c$-IPP), $c>0$, if for all $\mathbf{z}\in \F_q^n$
	and for all coalitions  $T \subseteq C$ of at most $c$ code words, we have

	$$
	\mathbf{z}\not \in \bigcup_{T, |T|\leq c} desc(T) \ \ \mathrm{or} \ \
	\bigcap_{\mathbf{z}\in desc(T)} T \neq \emptyset.
	$$
\end{defi}

\begin{defi}
	\label{def:TA}
A code $C$ is a $c$-traceability code (denoted $c$-TA), for $c>0$,
if for all subsets (coalitions)  $T \subseteq C$ of at most $c$ code
words, if $\mathbf{z}\in desc(T)$, then there exists a
$\mathbf{t}\in T$ such that
$d(\mathbf{z},\mathbf{t})<d(\mathbf{z},\mathbf{w})$ for all
$\mathbf{w}\in C-T$.
\end{defi}

We will also have ocasion to link our discussion to a weaker tracing property called 
$c$-frameproof (FP).

\begin{defi}
	\label{def:fp}
	A code $C$, defined  over $\F_q$, has the $c$-Frameproof
	Property (denoted $c$-FP), $c>0$, if for any code word $\mathbf{u}$ and a subset of  
	$C$ of size at most $c$
	$V=(\mathbf{v}^1,\ldots, \mathbf{v}^c)$, with $\mathbf{u} \notin V$,  we have
	$$
	\mathbf{u}|_i  \notin V|_i \ \ \mathrm{for} \ \ 1 \leq i \leq n.
	$$
	
\end{defi}

Note that $(c,1)$-SEP is equivalent to $c$-FP.

\subsection{Bezout identity}

Some of the results in this paper, make extensive use of the Bezout identity. Intuitively, the 
Bezout identity is the ability to do the euclidean algorithm backwards.

\begin{defi}
	\label{def:bezoutiden}
	Let $u$ and $v$ two elements in a ring $A$, and let $d=(a,b)$ a greatest common divisor of $a$ and $b$. The Bezout identity is an identity of the form 
	$$
	au-by=d,
	$$
	for some elements $a$ and $b$ in the ring $A$.
\end{defi}

\subsection{The separating condition for Reed-Solomon codes.}
\label{sec:rsippcodes}

Let us state previous results that lead to the motivation of our work.
In \cite[Lemma 1.6]{staddon2001combinatorial} authors show that if $|C| > c \geq q$ then $C$ is not a
$c$-IPP code.
In \cite[Lemma 1.3]{staddon2001combinatorial} it is shown that the $c$-TA property implies the
$c$-IPP property.
In  \cite{cfn:tt,cfnp:tt2000}\cite[Theorem 4.4]{staddon2001combinatorial} it is
proved that any $[n,k,d]$ code over $\F_q$ with  $d>n-n/c^2$ and $q>c$ is a $c$-TA code.

Interestingly enough in  \cite[Theorem 8]{ssw:taipp} authors construct a family of
truncated ($n<q-1$) $RS[n,k]_q$  codes that fail to be $c$-IPP if
$c^2>n/(n-d)$.
Then in \cite[Question 11]{ssw:taipp} the authors ask if it is always
true that for Reed-Solomon codes the $c$-IPP fails if $c^2>n/(n-d)$.

This is a very interesting question because a positive answer would mean that 
for Reed Solomon the $c$-IPP and $c$-TA properties are essentially the same. 

This question has been addressed in~\cite{fcsd:anatippirsc,mfs:otrbttporsc}. We can summarize previous results

\begin{teo}[\cite{fcsd:anatippirsc}, Theorem 6]
	\label{thm:ippmultiplicative}
	 Let $RS[n, k]_q$ be a Reed-Solomon code over $\F_q$ such
	that $k-1$ divides $q-1$. Then, if $d \leq n - \frac{n}{c^2}$ the code is not $c$-SEP.
\end{teo}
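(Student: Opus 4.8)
The plan is to disprove the $c$-separating property directly, by exhibiting two disjoint families $U=\{\cu^1,\dots,\cu^c\}$ and $V=\{\cv^1,\dots,\cv^c\}$ of codewords for which $U|_i\cap V|_i\neq\emptyset$ at \emph{every} coordinate $i$. Identifying each codeword with its defining polynomial of degree $<k$, the words $\cu$ and $\cv$ agree at the coordinate indexed by $\beta\in\F_q$ precisely when $\beta$ is a root of $\cu-\cv$. Since $\deg(\cu-\cv)<k$, each of the $c^2$ difference polynomials $\cu^i-\cv^j$ has at most $k-1$ roots, so the union of their root sets covers at most $c^2(k-1)$ of the $n=q$ coordinates. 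Thus a covering of all $n$ coordinates can exist only if $c^2(k-1)\ge n$; because the code is MDS we have $k-1=n-d$, and this feasibility condition is exactly $d\le n-\tfrac{n}{c^2}$. So the hypothesis is precisely the counting threshold, and the task is to realise a cover that meets it.

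Next I would exploit the arithmetic hypothesis $k-1\mid q-1$. Writing $m=k-1$, the set $H=\{x\in\F_q^*:x^m=1\}$ is the unique subgroup of order $m$, and its $r=(q-1)/m$ cosets partition $\F_q^*$ into blocks of size $m$; each coset is exactly the root set of a polynomial $x^m-c$ with $c$ a nonzero $m$-th power, since $x^{q-1}-1=\prod_{c\in(\F_q^*)^m}(x^m-c)$. These cosets are therefore the natural maximal agreement sets, and the strategy is to choose the codewords so that the differences $\cu^i-\cv^j$ are, up to a nonzero scalar, of the form $x^m-c_{ij}$, with the $c_{ij}$ ranging over \emph{all} $r$ nonzero $m$-th powers; this covers $\F_q^*$. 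One further pair must cover the remaining evaluation point $0$, which forces one difference to have vanishing constant term. The integrality of the hypothesis helps here: $c^2\ge n/(n-d)=q/m=r+\tfrac1m$ forces $c^2\ge\lceil q/m\rceil=r+1$, so there are exactly enough pairs to cover the $r$ cosets together with the point $0$.

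Concretely I would take $\cu^i=a_ix^m+s_i$ and $\cv^j=b_jx^m+t_j$, so that $\cu^i-\cv^j=(a_i-b_j)x^m+(s_i-t_j)$ and, whenever $a_i\neq b_j$, its root set is the coset of $m$-th roots of $w_{ij}=(t_j-s_i)/(a_i-b_j)$. Choosing the scalars $a_i,b_j$ as suitable powers of the primitive element $\alpha$ (for instance with exponents in arithmetic progressions of steps $m$ and $cm$, so that the $c^2$ quotients realise the residues $0,m,2m,\dots$ modulo $q-1$) makes the $w_{ij}$ run through all $r$ nonzero $m$-th powers, hence covers every coset. The point $0$ is caught by arranging a single coincidence of constant terms $s_{i_0}=t_{j_0}$, which makes the corresponding difference $(a_{i_0}-b_{j_0})x^m$ vanish there; keeping this coincidence isolated, with all other $s_i,t_j$ distinct, ensures that no pair is wasted.

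The hard part will be the last two requirements pulling against each other inside a balanced grid. The $c^2$ differences are not free: they are generated by only $2c$ polynomials, and the relation $(\cu^i-\cv^j)-(\cu^{i'}-\cv^j)=\cu^i-\cu^{i'}$ (independent of $j$) constrains which combinations of coset-values $w_{ij}$ are simultaneously attainable, so forcing the $w_{ij}$ to cover all $r$ $m$-th powers is a coupled additive--multiplicative condition rather than a free assignment. The delicate point is to secure this coset covering while isolating the single constant-term coincidence that captures $0$, so that exactly one pair is spent on the point $0$ and the remaining $c^2-1\ge r$ pairs suffice for $\F_q^*$; verifying that the exponent choice above meets this budget, and that all $2c$ polynomials are distinct with $U\cap V=\emptyset$ and $a_i\neq b_j$ throughout, is the technical heart of the argument.
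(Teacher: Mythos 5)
Your setup is the right one --- the counting threshold $c^2(k-1)\ge n$, the identification of the agreement sets with cosets of the order-$(k-1)$ subgroup $H=\{x:x^{k-1}=1\}$, and the observation that the hypothesis forces $c^2\ge r+1$ with $r=(q-1)/(k-1)$ --- and it matches in spirit the construction the paper itself gives in its Theorem~\ref{teo:cisquare} (the statement under review is only quoted from the literature, so Theorem~\ref{teo:cisquare} is the closest in-paper proof to compare with). But the proposal stops exactly where the proof has to start. You correctly note that the $c^2$ differences $\cu^i-\cv^j$ are generated by $2c$ polynomials and are therefore additively coupled, and that forcing the $w_{ij}=(t_j-s_i)/(a_i-b_j)$ to sweep out all $r$ nonzero $(k-1)$-th powers while reserving one pair for the point $0$ is ``the technical heart''; you then leave that heart unproved. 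The counting argument only shows the cover is not impossible, not that it exists, and with the general ansatz $a_ix^m+s_i$, $b_jx^m+t_j$ the quantities $w_{ij}$ mix the additive data $(s_i,t_j)$ with the multiplicative data $(a_i,b_j)$, which makes the simultaneous covering condition genuinely hard to verify. As written, the proposal is a plan, not a proof.

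The gap can be closed by rigidifying the construction rather than keeping it flexible. Take one family to be pure monomials $f_i=a_ix^{k-1}$ and the other to be constants $g_j=b_j$, so that $f_i-g_j$ vanishes exactly on the coset $\{x:x^{k-1}=b_j/a_i\}$ and the covering condition becomes purely multiplicative: one needs $\{b_j/a_i\}$ to exhaust the cyclic group $G$ of $(k-1)$-th powers, of order $r$. Choosing $h$ a generator of $G$ and $a_i=h^{(i-1)c}$, $b_j=h^{j-1}$, the exponents $(j-1)-(i-1)c$ form an interval of $c^2$ consecutive integers, hence surject onto $\Z_r$ whenever $c^2\ge r$; this is the elementary fact your ``arithmetic progressions of steps $m$ and $cm$'' gestures at but never establishes, and it is exactly the mechanism of the paper's Theorem~\ref{teo:cisquare} (there with $g_i=\alpha^{i(q-1)/m^2}$ and $f_i=\alpha^{-i(q-1)/m}x^{(q-1)/m^2}$). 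What remains delicate is precisely the evaluation point $0$: every $f_i$ vanishes there while no $b_j$ does, and since the budget is tight ($r+1\le c^2$, with equality possible) you cannot in general afford to spend a full row or column on it. The paper sidesteps this by working with the non-extended code of length $q-1$ (evaluation on $\F_q^*$ only); for the length-$q$ statement you must either modify one polynomial so that a single difference is a nonzero multiple of $x^{k-1}$ without destroying the coset cover, or verify that at least one residue class is hit twice so a redundant pair can be repurposed. Neither your budget check nor your ``isolated coincidence of constant terms'' shows this can always be done, so this step still needs an argument.
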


\begin{coro}[\cite{mfs:otrbttporsc}, Corollary 2]
	\label{cor:sepcgsqrtk1}
	Let $C$ be a $[n,k,d]$ Reed-Solomon code over $\F_q$. If $c\geq \sqrt{q-1}$ and 
	$d \leq (1-1/c^2)n$, then the code is not $c$-SEP.
 
\end{coro}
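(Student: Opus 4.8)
The plan is to derive Corollary~\ref{cor:sepcgsqrtk1} from Theorem~\ref{thm:ippmultiplicative} by reducing to the special case that the theorem already handles. Theorem~\ref{thm:ippmultiplicative} tells us the code fails to be $c$-SEP precisely when $d \le n - n/c^2$ \emph{and} the extra arithmetic hypothesis $k-1 \mid q-1$ holds. So the whole task is to show that, once $c \ge \sqrt{q-1}$, we may always arrange for the divisibility condition to be available, or else the conclusion is immediate by other means. My first step would be to translate the hypotheses: since the code is MDS we have $d = n-k+1$, so the bound $d \le (1-1/c^2)n$ is equivalent to $k-1 \ge n/c^2$, and with $n=q$ this reads $k-1 \ge q/c^2$.

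The second step is to exploit the assumption $c \ge \sqrt{q-1}$, i.e.\ $c^2 \ge q-1$. Combined with $k-1 \ge q/c^2$ one sees that $k-1$ is forced to be small relative to $q$; in fact $c^2 \ge q-1$ makes $q/c^2$ barely larger than $1$, so the constraint $k-1 \ge q/c^2$ is essentially just $k \ge 2$, which holds for any nontrivial code. The point of the square-root hypothesis is therefore to \emph{remove} the awkward divisibility requirement of the theorem: I expect that when $c$ is this large, one can invoke a monotonicity or sub-code argument. Concretely, if a Reed--Solomon code of dimension $k$ is $c$-SEP, then so is any sub-code obtained by lowering the dimension (a separating code remains separating when we pass to a subset of its code words, and the $RS$ code of dimension $k'<k$ embeds as a subset of the dimension-$k$ code). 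Hence it suffices to exhibit \emph{some} dimension $k'$ with $k'-1 \mid q-1$, with $k'-1 \ge q/c^2$, and apply Theorem~\ref{thm:ippmultiplicative} to that sub-code.

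The key arithmetic step, then, is to find a divisor $m$ of $q-1$ lying in the admissible range, namely $q/c^2 \le m \le k-1$, so that setting $k'-1 = m$ gives a genuine sub-code meeting Theorem~\ref{thm:ippmultiplicative}'s hypothesis. Because $c^2 \ge q-1$, the lower endpoint $q/c^2 \le q/(q-1)$ is just above $1$, so the smallest admissible divisor is $m=1$, corresponding to $k'=2$; and $1$ always divides $q-1$. Thus the choice $k'=2$ works whenever $k \ge 2$, and Theorem~\ref{thm:ippmultiplicative} applied to the resulting $RS[n,2]_q$ sub-code yields the failure of the $c$-SEP property, which lifts back to the original code.

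The main obstacle I anticipate is the direction of the sub-code argument: one must check carefully that non-separation of a sub-code really does imply non-separation of the ambient code, since separation is a property that could \emph{improve} on subsets rather than degrade. The correct reading is that a violating pair of coalitions $U,V$ inside the sub-code is also a violating pair inside the full code, so failure propagates upward, which is exactly what we need. I would also verify the boundary bookkeeping—that the inequality $d \le n-n/c^2$ is preserved when we replace $k$ by the smaller $k'$ (it only gets easier, since decreasing $k$ increases $d$, so one must instead track the bound in terms of $k'$ directly via $d' = n-k'+1$ and confirm $d' \le n - n/c^2$ still holds under $c \ge \sqrt{q-1}$). Handling this monotonicity cleanly, rather than the arithmetic of finding the divisor, is where the real care is required.
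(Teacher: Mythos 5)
Your overall strategy is sound: the paper itself does not reprove Corollary~\ref{cor:sepcgsqrtk1} (it is quoted from~\cite{mfs:otrbttporsc}), and deducing it from Theorem~\ref{thm:ippmultiplicative} via a lower-dimensional Reed--Solomon sub-code is a legitimate route. Your monotonicity observation is also correct: a violating pair of coalitions inside $RS[n,k']_q \subseteq RS[n,k]_q$ is a violating pair for the ambient code, so non-separation does propagate upward.

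However, there is a genuine gap at the boundary case $c^2=q-1$, and it is exactly the point you flagged but then waved away. To apply Theorem~\ref{thm:ippmultiplicative} to the sub-code of dimension $k'=m+1$ you need $d'=n-m \le n-n/c^2$, i.e.\ $m \ge q/c^2$. Your claim that ``the smallest admissible divisor is $m=1$'' requires $q/c^2\le 1$, i.e.\ $c^2\ge q$; but the hypothesis only gives $c^2\ge q-1$, and when $q-1$ is a perfect square (e.g.\ $q=5,17,37,101$) the case $c=\sqrt{q-1}$ is allowed, where $q/c^2=q/(q-1)>1$ and the dimension-$2$ sub-code has $d'=n-1>n-n/c^2$, so Theorem~\ref{thm:ippmultiplicative} does not apply to it. The repair is short but must be made: in that case the hypothesis $d\le(1-1/c^2)n$ forces $k-1\ge q/(q-1)>1$, hence $k-1\ge 2$; moreover $q$ must be odd (for $s\ge 2$ one has $2^s-1\equiv 3 \pmod 4$, which is not a square, so $q-1=c^2$ rules out even $q$), hence $2\mid q-1$ and one may take $m=2$, which satisfies $2\ge q/(q-1)$ and $2\le k-1$. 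With that case added, your argument goes through; without it, the proof as written fails precisely on an infinite family of admissible parameters.
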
 

 \begin{teo}[\cite{mfs:otrbttporsc},Theorem 2]
 	\label{thm:ippadditive}
 	Let $RS[n, k,d]_q$ be a Reed-Solomon code over $\F_q$ and $c$ a divisor of $q$.
 	Then, if $d \leq n-\frac{n}{c^2}$ the code is not $c$-SEP.
 \end{teo}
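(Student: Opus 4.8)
The plan is to show the code fails $c$-SEP by exhibiting two disjoint $c$-subsets $U = \{u^1,\dots,u^c\}$ and $V = \{v^1,\dots,v^c\}$ of codewords that are \emph{not} separated, i.e.\ such that at every one of the $n = q$ coordinates $x \in \F_q$ some $u^i(x)$ coincides with some $v^j(x)$. The natural way to gauge feasibility is a counting heuristic: if $u^i \neq v^j$ then $u^i - v^j$ is a nonzero polynomial of degree at most $k-1$, so the pair $(i,j)$ can force a coincidence in at most $k-1$ coordinates. With $c^2$ pairs available we can cover at most $c^2(k-1)$ coordinates, and since the code is MDS the hypothesis $d \le n - n/c^2$ reads $k - 1 \ge n/c^2$, i.e.\ $c^2(k-1) \ge n$. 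The bound is therefore exactly tight, which strongly suggests seeking a construction in which the $c^2$ coincidence sets \emph{tile} $\F_q$.

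To build such a configuration I would exploit the hypothesis $c \mid q$. Writing $q = p^s$ we get $c = p^t$, and I would treat the main range $c^2 \le q$ (equivalently $c^2 \mid q$, since both sides are powers of $p$); the complementary range $c^2 > q$ forces $c \ge \sqrt{q-1}$ and is already covered by Corollary~\ref{cor:sepcgsqrtk1}. Put $m = q/c^2 = p^{s-2t}$, a power of $p$ with $m \le k-1$. Choose an $\F_p$-subspace $K \subseteq \F_q$ of dimension $s - 2t$ and let
$$
R(x) = \prod_{\kappa \in K}(x - \kappa)
$$
be its subspace polynomial. This $R$ is an additive ($\F_p$-linear) separable map of degree $m$ with kernel $K$, so its image $W = R(\F_q)$ is an $\F_p$-subspace of dimension $2t$ and size $c^2$, and the fibres $R^{-1}(w)$, $w \in W$, are the cosets of $K$ and partition $\F_q$.

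Now I would split $W$ as a direct sum $W = W_1 \oplus W_2$ of two $t$-dimensional subspaces, enumerate $W_1 = \{w_1,\dots,w_c\}$ and $W_2 = \{w_1',\dots,w_c'\}$, and define
$$
u^i(x) = R(x) - w_i, \qquad v^j(x) = w_j'.
$$
Each $u^i$ has degree $m \le k-1$ and each $v^j$ is constant, so all are codewords; the $u^i$ are pairwise distinct, the $v^j$ are pairwise distinct, and $U \cap V = \emptyset$ because the two families have different degrees. Since $u^i(x) = v^j(x)$ exactly when $R(x) = w_i + w_j'$, as $(i,j)$ ranges over all pairs the value $w_i + w_j'$ runs over all of $W_1 + W_2 = W = \mathrm{Im}\, R$; hence the union of the coincidence sets is $\bigcup_{w \in W} R^{-1}(w) = \F_q$. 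Every coordinate is thus covered, so $U$ and $V$ are not separated and the code is not $c$-SEP.

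The heart of the argument — and where I expect the real work to sit — is arranging the $c^2$ coincidence sets to tile $\F_q$ with no waste. The device that makes this possible is the linearized polynomial $R$: because every difference $u^i - v^j = R(x) - (w_i + w_j')$ factors through the single $\F_p$-linear map $R$, its root set is automatically a coset of the fixed subspace $K$, and the direct-sum decomposition $W = W_1 \oplus W_2$ guarantees that each of the $q/m = c^2$ cosets is realized exactly once. The routine points to verify are that $R$ is additive and separable (standard for subspace polynomials, so indeed $|K| = m$ and $|\mathrm{Im}\, R| = c^2$) and the clean dispatch of the boundary case $c^2 > q$ through the already-cited corollary.
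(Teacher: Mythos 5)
Your construction is correct. Note first that the paper does not actually prove Theorem~\ref{thm:ippadditive}: it is quoted from prior work, so there is no in-paper proof to compare against. Your argument is the natural ``additive'' counterpart of the paper's Theorem~\ref{teo:cisquare}, which plays the same game multiplicatively (powers of a primitive root indexed by residues mod $m^2\mid q-1$) where you use cosets of an $\F_p$-subspace indexed by a direct sum $W_1\oplus W_2$ with $c^2\mid q$. All the key steps check out: the subspace polynomial $R=\prod_{\kappa\in K}(x-\kappa)$ is linearized and separable with $|\mathrm{Im}\,R|=c^2$, the hypothesis $d\le n-n/c^2$ together with MDS gives exactly $\deg R = q/c^2\le k-1$, the $c^2$ values $w_i+w_j'$ exhaust $\mathrm{Im}\,R$ because the sum is direct, and the fibres of $R$ tile $\F_q$, so the two coalitions meet at every coordinate. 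Two small remarks. First, your dispatch of the range $c^2>q$ leans on Corollary~\ref{cor:sepcgsqrtk1}, which is cited from the \emph{same} source as the theorem you are proving; within this paper that is acceptable (both are stated as independent known results), but a fully self-contained proof would need its own argument there. Second, your counting preamble ($c^2(k-1)\ge n$ being exactly tight) is a nice sanity check but is not needed for the proof; the tiling by cosets of $K$ does all the work.
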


It is worth noting that the proofs of these theorems are constructive in the sense that 
explicit disjoint sets $U,V$, such that 
$$
U \cap V = \emptyset \ \ \mathrm{and} \ \ desc(U) \cap desc(V) = \emptyset
$$
are found. Therefore proving  that for Reed-Solomon codes under the conditions of
the theorems $c$-SEP, $c$-IPP and $c$-TA are in fact equivalent.


\subsection{Our contribution}
In this paper,  progress in the understanding of the tracing properties in Reed-Solomon codes is made. In the flavour
of Theorem~\ref{thm:ippmultiplicative} and Theorem~\ref{thm:ippadditive} we use the structure of the finite field
$\F_q$, over which the code is defined. In our particular case, we take advantage of the divisors of $q-1$. With that, we are able to give a complete answer to Question 2 by proving, in a constructive way, that in Reed-Solomon codes 
$c$-IPP and $c$-TA properties are essentially the same when $q\equiv 1\pmod {c^2}$. More precisely, we set the minimum distance to $d = \lceil n-\frac{n}{c^2} \rceil$ which is the maximum allowed so the code is not $c$-TA and then find 
two disjoint sets of code words that are not separated.
In the rest of the paper, although the proofs  are also constructive,
the approach is somehow different.  We relax the distance condition and study whether a $[n,k,d]$ Reed-Solomon  code over $\F_q$ 
 with minimum distance $d< (n-r)$ is $c$-SEP for some $r>n/c^2$. To show the flavour of our approach, we start by proving a relaxed version of the question and state that Reed-Solomon codes  
are not $c$-SEP for $r=n/c$ . Then, we proceed to strengthen this result. For the case $c=2$, we answer the question for $r=\left[\frac{q}{3}\right]$ and for $c=3$ we do so for $r=2\left[\frac {q}{8}\right]$.
We round up the paper using an elegant result of Cilleruelo~\cite{cilleff} to give an alternate and more concise proof of known results.

\section{A connection with the frameproof property.}

We start our discussion by studying Reed-Solomon codes over $\F_q$ with minimum distance $d\leq q-\frac{q}{c}$. 

\begin{teo} 
	\label{thm:sepfp}
	For any $q$ a power of prime,  $c\ge 2$ and $d\leq q-
\dfrac q{c}$,   Reed-Solomon $[n,k,d]$ codes over $\F_q$ are not $c$-SEP.
\end{teo}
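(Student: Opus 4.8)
The plan is to show that the code fails to separate already in the extreme configuration where one of the two coalitions is a single codeword; by the Remark following Definition~\ref{def:fp} this is precisely a failure of the $c$-frameproof property, which is the connection announced in the section title. Concretely, I will exhibit a codeword $\mathbf{u}$ and a set $V$ of at most $c$ codewords with $\mathbf{u}\notin V$ and $\mathbf{u}\in desc(V)$. Taking $U=\{\mathbf{u}\}$ then yields two disjoint subsets of $C$ with $\mathbf{u}\in desc(U)\cap desc(V)$, so $desc(U)\cap desc(V)\neq\emptyset$; since $desc(U)\cap desc(V)=\emptyset$ is equivalent to the existence of a coordinate $i$ with $U|_i\cap V|_i=\emptyset$, this exactly contradicts Definition~\ref{def:sep}, and the code is not $c$-SEP. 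Because the code is MDS we have $d=n-k+1$ with $n=q$, so the hypothesis $d\leq q-\frac qc$ rewrites as $k-1\geq q/c$, i.e. $c(k-1)\geq q$. This inequality is the whole point: a nonzero polynomial of degree at most $k-1$ has at most $k-1$ roots, and $c$ of them can collectively have enough roots to cover all $q$ evaluation points.

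First I would take $\mathbf{u}$ to be the zero codeword, arising from $f=0$. The evaluation points of an $RS[n,k]_q$ code are $0$ together with $\F_q\setminus\{0\}$, that is, all of $\F_q$, so matching the coordinates of $\mathbf{u}$ amounts to making each element of $\F_q$ a root of one of the chosen polynomials. Using $c(k-1)\geq q$, partition $\F_q$ into $m=\lceil q/(k-1)\rceil$ nonempty blocks $B_1,\dots,B_m$, each of size at most $k-1$, and put
$$
g_i(x)=\prod_{a\in B_i}(x-a),\qquad 1\le i\le m.
$$
Each $g_i$ is monic of degree $|B_i|\le k-1$, hence lies in $\F_q[x]_{k-1}$ and determines a codeword $\mathbf{v}^i$; set $V=\{\mathbf{v}^1,\dots,\mathbf{v}^m\}$. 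Note $k-1\le q-1$ (since $k\le n=q$), so every block is a proper nonempty subset of $\F_q$, which will matter below.

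To finish I would verify the three requirements. Every $x\in\F_q$ lies in exactly one block $B_i$, where $g_i(x)=0=f(x)$, so each coordinate of $\mathbf{u}$ agrees with the corresponding coordinate of some $\mathbf{v}^i$; hence $\mathbf{u}\in desc(V)$. Since the blocks are nonempty and pairwise disjoint, the $g_i$ are nonconstant of degree $<q$ with pairwise disjoint root sets; as the evaluation map is injective on polynomials of degree $<q$, the $\mathbf{v}^i$ are pairwise distinct and all different from $\mathbf{u}$, giving $|V|=m$ and $\mathbf{u}\notin V$, so $U$ and $V$ are disjoint. Finally $m=\lceil q/(k-1)\rceil\le c$ follows directly from $c(k-1)\ge q$, so $|V|\le c$ (and one may pad both coalitions up to size exactly $c$ with extra distinct codewords if an equal-size formulation is preferred, since this only enlarges the descendant sets). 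I expect no genuine obstacle: the only delicate points are bookkeeping, namely that $k-1\ge q/c>0$ forces $k\ge 2$ so the construction is well defined, and that no block equals all of $\F_q$ so that no $g_i$ degenerates into the zero codeword. The entire content of the theorem is captured by the root-covering inequality $c(k-1)\ge q$, and the resulting argument is constructive, producing explicit disjoint coalitions that are not separated.
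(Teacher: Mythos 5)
Your proof is correct and rests on exactly the same construction as the paper's: partition $\F_q$ into at most $c$ blocks of size at most $k-1$ (possible precisely because $d\le q-\frac qc$ together with the MDS relation $d=q-k+1$ gives $c(k-1)\ge q$) and take the products of linear factors over the blocks as the confusing codewords. The only difference is packaging: you specialize the first coalition to the single zero codeword and argue via failure of the frameproof property, padding up to size $c$ afterwards, whereas the paper perturbs $c$ arbitrary distinct polynomials $f_i$ by these same products to produce two size-$c$ coalitions directly --- and it records your specialization explicitly in the remark on $c$-FP immediately following Theorem~\ref{thm:sepfp}.
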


\begin{proof}
Let $\F_q=\{\alpha_1,\alpha_2\dots,\alpha_q\}$ and $c\ge 2$ an integer. To construct the Reed Solomon code we  consider  any set of $c$  distinct polynomials $\{f_1,\dots,f_c\}\subset \F_q[x]_{k-1}$ for $k-1=q-d$. Observe that in this case $c(k-1)\geq q$.

Let $q=cl+r$ where $0\le r<c$. Since the case $c|q$ is taken care of in Theorem~\ref{thm:ippadditive}, 
we consider $c\nmid q$ and then $r>0$. We take the maximum allowed minimum distance $d=\lfloor q-\frac q{c}\rfloor$.
Then,
$l=\frac qc-\frac rc<k-1$ and since $l$ is an integer $l\le k-2$ which means $l+1\le k-1$.

For $1\le i\le r$ we take
$$
 g_i=f_i+\prod_{m=1}^{l+1}(x-\alpha_{(l+1)(i-1)+m}),
$$ 
while for $ r+1\le i\le c$ take 
$$
g_i=f_i+\prod_{m=1}^{l}(x-\alpha_{(i-1)l+m+r}).
$$

Observe that $g_i\in\F_q[x]_{k-1}$ for all $1\le i\le c$. Then,  by construction, for every $1\le j\le q$ the polynomials  $f_{t}$ and $g_t$ with $t=[\frac{j}{k-1}]$ are such that $f_t(\alpha_j)=g_t(\alpha_j)$. Note that $t\le j/(k-1)\le q/(k-1)\le c$. Hence   $\{f_1(\alpha),\dots,f_c(\alpha)\}\cap \{g_1(\alpha),\dots,g_c(\alpha)\}\ne\emptyset$ for any $\alpha\in\F_q$ and so the two sets citeof code words given by $(f_i(\alpha_1),\dots,f_i(\alpha_n)$ and $(g_i(\alpha_1),\dots,g_i(\alpha_n)$ for $1\le i\le c$ are not separated.

We finally show that the constructed polynomials are all different. Observe that for any two polynomials $f,g$  of degree smaller than $q$ it is not possible that  $f(\alpha)=g(\alpha)$ for every $\alpha\in\F_q$, since  in that case $(x^q-x)|(f-g)$. Hence, taking $f_i\ne f_j$ for any $i\ne j$, then for any $1\le j,l\le c$ there exist an $\alpha\in\F_q$ such that $g_j(\alpha)\ne f_l(\alpha)$  and also for any for $i\ne j$, there exist a $\beta\in\F_q$ such that   $g_i(\beta)\ne g_j(\beta)$  since for any two polynomials in the code $f,g$ we have deg$(f-g)\le k-1=q-d< q$.

Finally, note that, since the code is MDS, taking $n=q$ we have $d=q-k+1$ and the condition $d\leq q-q/c^2$ implies $q\leq c^2(k-1)$ which is true in our case since $c^2(k-1)\ge c(k-1)\ge q$. 
\qed
\end{proof}

In Section~\ref{sec:baprocitc} we defined $c$-frameproof codes. Take an $[n,k,d]_q$ code. It is well known, see proof of Lemma III.2 in~\cite{bs:csffddit}, that if $d>n-\frac{n}{c}$ then the code is $c$-FP. In Theorem~\ref{thm:sepfp} we have also proved that a Reed-Solomon code with
minimum distance $d\leq n-\frac{n}{c}$ is not $c$-FP. Indeed, by taking $f_1 = f_2 = \cdots = f_c = \alpha$, $\alpha \in \F_q$ in the proof of the theorem, for every $\alpha \in \F_q$ there is a $g_i$ such that
$g_i(\alpha)= \alpha$.

\section{Increasing the minimum distance.}

In the previous section we saw that a Reed Solomon code with a small distance is not separated.
This is consistent with intuition since then the code has a higher dimension as a vector space then
chances are code words are not ``separated''. In this section we discuss strategies to increase the minimum
distance of the code and still keep non-separation.

\subsection{The case $c=2$.}

To show our approach we first deal with a particular case.

\begin{lema} The $[11,4,8]$ Reed-Solomon code over $F_{11}$  is not $(2,2)$-separating.
\end{lema}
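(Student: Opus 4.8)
The plan is to exhibit two disjoint pairs of codewords whose descendant sets intersect, exactly as in the proof of Theorem~\ref{thm:sepfp}. Writing codewords as polynomials $f_1,f_2,g_1,g_2\in\F_{11}[x]$ of degree at most $k-1=3$, I set $U=\{f_1,f_2\}$ and $V=\{g_1,g_2\}$, and I must arrange that for every $\alpha\in\F_{11}$ the value sets meet, i.e. $\{f_1(\alpha),f_2(\alpha)\}\cap\{g_1(\alpha),g_2(\alpha)\}\neq\emptyset$. Equivalently, at each $\alpha$ at least one of the four difference polynomials $P=f_1-g_1$, $Q=f_1-g_2$, $R=f_2-g_1$, $S=f_2-g_2$ must vanish. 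Thus the whole problem reduces to choosing four cubics whose combined root sets cover all of $\F_{11}$; once this holds, the word $\cz$ with $z_i$ equal to the common value at $\alpha_i$ lies in $desc(U)\cap desc(V)$, so the pairs are not separated.

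The counting is tight: each difference is a nonzero polynomial of degree at most $3$, hence has at most three roots, so four of them contribute at most $12$ roots to cover the $11$ points of $\F_{11}$, forcing an essentially partition-like configuration. The four differences are not independent; they satisfy the single linear identity $P+S=Q+R$ (both sides equal $f_1+f_2-g_1-g_2$), so $S$ is determined by $P,Q,R$. The strategy is therefore to prescribe the three triples of roots $\{0,1,2\}$ for $P$, $\{3,4,5\}$ for $Q$, and $\{6,7,8\}$ for $R$, leaving the two remaining points $\{9,10\}$ to be covered by $S=Q+R-P$.

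Concretely I would write $P=p\,x(x-1)(x-2)$, $Q=s\,(x-3)(x-4)(x-5)$, $R=t\,(x-6)(x-7)(x-8)$ with undetermined leading coefficients $p,s,t\in\F_{11}$, and impose the two conditions $S(9)=S(10)=0$. These are two homogeneous linear equations in $p,s,t$ over $\F_{11}$; solving them (they reduce to $t=-p$ and $s=7p$) gives a one-parameter family, and taking $p=1$ yields $s=7$, $t=10$, with leading coefficient $s+t-p=5\neq 0$ for $S$, so $S$ is a genuine cubic vanishing at $9$ and $10$. I then recover the codewords by fixing one of them, say $g_1=0$, which forces $f_1=P$, $f_2=R$ and $g_2=P-Q$; a direct check confirms these are four distinct polynomials with $U\cap V=\emptyset$, and that the prescribed partition makes every $\alpha\in\F_{11}$ a root of one of $P,Q,R,S$, whence $U$ and $V$ are not separated. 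The main obstacle is precisely the compatibility of the covering with the constraint $P+S=Q+R$: the triples can be chosen freely, but closing up the configuration requires the leftover cubic $S$ to vanish on the two remaining points, which is exactly what the small linear system above guarantees.
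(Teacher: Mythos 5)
Your proposal is correct and follows essentially the same route as the paper: both arguments partition $\F_{11}$ into blocks of sizes $3+3+3+2$, assign each block as the root set of one of the four pairwise differences $f_i-g_j$, and close up the configuration by solving a small linear system (your arithmetic checks out: with $p=1$, $s=7$, $t=10$ one gets $S(9)=S(10)=0$ and all four differences nonzero). The only difference is organizational — the paper parametrizes via Lagrange interpolation and a $2\times 2$ inhomogeneous system subject to a nondegeneracy condition on the evaluation points, whereas your use of the syzygy $P+S=Q+R$ yields a homogeneous $2\times 3$ system that automatically has a nontrivial solution, making the existence argument slightly cleaner.
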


\begin{proof}
We will find polynomials $f_1$,$f_2$ and $g_1$,$g_2$ such that the corresponding pairs of codewords $\{\mathbf{f}_1, \mathbf{f}_2\}$ and $\{\mathbf{g}_1, \mathbf{g}_2\}$ are not separated. 

Consider the polynomial $f_1=0$, and take $g_1=\gamma_1\prod_{i=1}^{3}(x-\alpha_i)$, for some  $\{\gamma_1,\alpha_1,\alpha_2,\alpha_3\}\in\F_{11}$. Now, let 
$$
f_2=\sum_{i=1}^{3}g_1(\alpha_{3+i})\frac{\prod_{j\in \{1,2,3\},j\ne i}(x-\alpha_{3+j})}{\prod_{j\in \{1,2,3\},j\ne i}(\alpha_{3+i}-\alpha_{3+j})}+\phi_2\prod_{i=1}^{3}(x-\alpha_{3+i}),
$$

for some $\{\phi_2,\alpha_4,\alpha_5,\alpha_6\}\in\F_{11}$.
Finally, consider 
$$
g_2=\gamma_2\prod_{i=1}^3(x-\alpha_{6+i}),
$$
for some $\{\gamma_2,\alpha_7,\alpha_8,\alpha_9\}\in\F_{11}$.

By construction $\{f_1(\alpha_i),f_2(\alpha_i)\}\cap \{g_1(\alpha_i),g_2(\alpha_i)\}\ne\emptyset$ for $i=1,\dots,9$. Now, selecting $\phi_2,\gamma_2$ such that
\begin{eqnarray*}
	\sum_{i=1}^{3}g_1(\alpha_{3+i})\frac{\prod_{j\in \{1,2,3\},j\ne i}(\alpha_{10}-\alpha_{3+j})}{\prod_{j\in \{1,2,3\},j\ne i}(\alpha_{3+i}-\alpha_{3+j})}&=&\gamma_2\prod_{i=1}^3(\alpha_{10}-\alpha_{6+i})-\phi_2\prod_{i=1}^{3}(\alpha_{10}-\alpha_{3+i})\\
	\sum_{i=1}^{3}g_1(\alpha_{3+i})\frac{\prod_{j\in \{1,2,3\},j\ne i}(\alpha_{11}-\alpha_{3+j})}{\prod_{j\in \{1,2,3\},j\ne i}(\alpha_{3+i}-\alpha_{3+j})}&=&\gamma_2\prod_{i=1}^3(\alpha_{11}-\alpha_{6+i})-\phi_2\prod_{i=1}^{3}(\alpha_{11}-\alpha_{3+i}),
\end{eqnarray*}
which is possible whenever 
$$
\prod_{i=1}^3(\alpha_{10}-\alpha_{6+i})(\alpha_{11}- \alpha_{3+i})\ne\prod_{i=1}^3(\alpha_{11}-\alpha_{6+i})(\alpha_{10}-\alpha_{3+i}),
$$
we get
\begin{eqnarray*}
	f_2(\alpha_{10})&=&g_2(\alpha_{10})\\
	f_2(\alpha_{11})&=&g_2(\alpha_{11})
\end{eqnarray*}
and hence the pairs $\{\mathbf{f}_1, \mathbf{f}_2\}$ and $\{\mathbf{g}_1, \mathbf{g}_2\}$ are not separated. 
\qed
\end{proof}

As an example take $\alpha_i=i$. Then 
$$
\prod_{i=1}^3(\alpha_{10}-\alpha_{6+i})(\alpha_{11}- \alpha_{3+i})=\prod_{i=1}^3(4-i)(8-i)=7!/4,
$$
while
$$
\prod_{i=1}^3(\alpha_{11}-\alpha_{6+i})(\alpha_{10}-\alpha_{3+i})=\prod_{i=1}^3(5-i)(7-i)=4\cdot6!.
$$
In this case
\begin{eqnarray*}
	g_1&=&x^3-6x^2+11x-6\\
	f_2&=& 5x^3+10x+9\\
	g_2&=& 10x^3+x^2+x+9
\end{eqnarray*}
and we have that 
\begin{eqnarray*}
	\mathbf{f}_1&=&(0,0,0,0,0,0,0,0,0,0,0)\\
	\mathbf{f}_2&=&(2,3,9,6,2,5,1,9,4,5,9)\\
	\\
	\mathbf{g}_1&=&(0,0,0,6,2,5,10,1,6,9,5)\\
	\mathbf{g}_2&=&(6,1,10,5,2,6,0,0,0,5,9)
\end{eqnarray*}
are not separated. It is clear that the solution is not unique,
Taking $\alpha_i=i+1$, we get
\begin{eqnarray*}
	g_1&=&x^3-9x^2+26x-24\\
	f_2&=& -\frac{13}{4}x^3+\frac{135}2x^2-\frac{1715}4x+\frac{1737}2\\
	g_2&=& -x^3+27x^2-242x+720,
\end{eqnarray*}
which give the pairs
\begin{eqnarray*}
	\mathbf{f}_1&=&(0,0,0,0,0,0,0,0,0,0,0)\\
	\mathbf{f}_2&=&(9,2,3,9,6,2,5,1,9,4,5)\\
	\\
	\mathbf{g}_1&=&(5,0,0,0,6,2,5,10,1,6,9)\\
	\mathbf{g}_2&=&(9,6,1,10,5,2,6,0,0,0,5)
\end{eqnarray*}
that are also not separated.

Note that the previous lemma answers Question 2 for $q=11$ and $c=2$ and for these particular values improves Theorem~\ref{thm:sepfp}. However this approach does not fully generalize. Fortunately, for general $q$ we can give a fully explicit answer, by decreasing a bit the distance of the code.

\begin{teo}\label{teo:2gen} A Reed Solomon code over $\F_{q}$ with distance $d=q-\left[\frac{q}{3}\right]$ is not $2$-SEP.
\end{teo}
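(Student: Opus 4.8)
The plan is to generalize the construction of the preceding lemma by partitioning the ground field into three blocks of size $k-1$ together with a small remainder, and then to reduce the existence of a non‑separated pair to a single non‑vanishing determinant that I control through a Wilson‑type product identity. Write $m=[q/3]=k-1$, so that $q=3m+s$ with $s\in\{0,1,2\}$, and fix a partition $\F_q=A\sqcup B\sqcup C\sqcup D$ with $|A|=|B|=|C|=m$ and $|D|=s$. First I would set $f_1=0$, $g_1=\prod_{a\in A}(x-a)$ and $g_2=\gamma_2\prod_{c\in C}(x-c)$, and define $f_2$ to be the Lagrange interpolant of $g_1$ on the nodes $B$ plus a free term $\phi_2\prod_{b\in B}(x-b)$; all four polynomials then have degree at most $m=k-1$. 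By construction the two pairs coincide on $A$ (where $f_1=g_1=0$), on $B$ (where $f_2=g_1$) and on $C$ (where $f_1=g_2=0$), so $\{\mathbf{f}_1,\mathbf{f}_2\}$ and $\{\mathbf{g}_1,\mathbf{g}_2\}$ already fail to be separated on every coordinate outside $D$. It remains to force $f_2=g_2$ on the $s\le 2$ points of $D$, which is a linear system in the two unknowns $(\phi_2,\gamma_2)$; for $s\le 1$ it is trivially solvable, so the only genuine case is $s=2$, i.e. $q\equiv 2\pmod 3$.

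The heart of the matter, and the step I expect to be the main obstacle, is the solvability of this $2\times 2$ system when $s=2$, which amounts to the determinant condition
$$
\prod_{c\in C}(\beta_2-c)\prod_{b\in B}(\beta_1-b)\ \ne\ \prod_{c\in C}(\beta_1-c)\prod_{b\in B}(\beta_2-b),\qquad D=\{\beta_1,\beta_2\}.
$$
Setting $\psi(x)=(\beta_1-x)/(\beta_2-x)$, $v=\prod_{b\in B}\psi(b)$ and $w=\prod_{c\in C}\psi(c)$, this is exactly the inequality $v\ne w$. The key observation is the Wilson‑type identity $\prod_{\alpha\in\F_q\setminus\{\beta\}}(\beta-\alpha)=-1$, valid because $(x^q-x)'=-1$ in characteristic $p$; applying it at $\beta_1$ and at $\beta_2$ yields $\prod_{x\in\F_q\setminus\{\beta_1,\beta_2\}}\psi(x)=-1$, and hence $v\,w=-1/\prod_{a\in A}\psi(a)$ depends only on the block $A$.

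Therefore, if I choose $A$ (together with $\beta_1,\beta_2$) so that $vw$ is a non‑square in $\F_q^{*}$, then $v=w$ would force $v^2=vw$ to be a square, a contradiction; so $v\ne w$ holds automatically for every splitting of the remaining $2m$ elements into $B$ and $C$, and the system is solvable. The existence of such an $A$ reduces to the non‑constancy of $\chi\bigl((\beta_1-x)(\beta_2-x)\bigr)$ as $x$ ranges over $\F_q$, where $\chi$ is the quadratic character: by a standard character‑sum bound the sum $\sum_{x}\chi\bigl((\beta_1-x)(\beta_2-x)\bigr)$ is $O(\sqrt q)$, so this character cannot be constant, which means swapping a single element in and out of $A$ toggles the square/non‑square type of $vw$ and lets me land on a non‑square (the finitely many small $q$ being checked directly).

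Finally I would verify that the two pairs are genuinely disjoint. Since $\mathbf{g}_1\ne\mathbf 0=\mathbf{f}_1$ and, once $\gamma_2\ne0$, $\mathbf{g}_2\ne\mathbf{f}_1$, while $f_2$ and $g_1$ (both of degree $\le m$, agreeing on the $m$ nodes of $B$) coincide only when $\phi_2=1$, the degeneracies are exactly $\gamma_2=0$ and $\phi_2=1$; both are avoidable using the scaling freedom in $g_1$, and two distinct polynomials of degree at most $m<q$ cannot agree on all of $\F_q$, which rules out the remaining coincidences. The distance bookkeeping is immediate, since the code is MDS: $d=q-m=q-[q/3]$ and the non‑separating range is pushed up to $r=m=[q/3]>q/4=n/c^{2}$, strictly improving the $r=q/c$ threshold of Theorem~\ref{thm:sepfp}.
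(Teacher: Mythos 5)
Your route is genuinely different from the paper's, and its central step has a gap. The paper's proof of Theorem~\ref{teo:2gen} does \emph{not} generalize the $[11,4,8]$ lemma: it partitions $\F_q$ into only three blocks $A_{1,1},A_{1,2},A_{2,1}$ of sizes $[q/3]$ or $[q/3]+1$ with no leftover set, and takes $f_1=0$, $g_1=-p_{1,1}$, $g_2=-p_{1,2}$, $f_2=p_{2,1}-p_{1,1}$, so that $f_1-g_1=p_{1,1}$, $f_1-g_2=p_{1,2}$ and $f_2-g_1=p_{2,1}$ each vanish identically on their block. Every coordinate is covered, there is no interpolation, no residual set $D$, and no linear system to solve; the one extra root allowed per block is exactly what removes the determinant obstruction you are fighting. (To be fair, your degree bookkeeping is tighter: all your blocks have size exactly $k-1=[q/3]$, whereas the paper's blocks of size $[q/3]+1$ produce polynomials of degree one more than $k-1$ when $q\not\equiv 0\pmod 3$.)

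The gap in your argument is the characteristic~$2$ case. Your solvability criterion for the $2\times 2$ system when $s=2$ is ``choose $A$ so that $vw=-1/\prod_{a\in A}\psi(a)$ is a non-square in $\F_q^*$''. In $\F_{2^e}$ the Frobenius $x\mapsto x^2$ is a bijection, so every element of $\F_q^*$ is a square, the quadratic character is trivial, and the implication ``$vw$ a non-square $\Rightarrow v\ne w$'' has no content. This is not an empty case: $q=2^e\equiv 2\pmod 3$ precisely when $e$ is odd (e.g.\ $q=8,32,128$), which is exactly the situation $s=2$ where the determinant condition is needed. The step is repairable without characters --- since $\psi(x)=(\beta_1-x)/(\beta_2-x)$ is injective, if every balanced split of $B\cup C$ gave $v=w$ then swapping $b\in B$ with $c\in C$ would force $\psi(b)^2=\psi(c)^2$ for all such pairs, which is impossible once $|B\cup C|\ge 3$ up to a handful of small $q$ --- but that argument is not in your write-up. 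A secondary loose end: the degeneracies $\gamma_2=0$ and $\phi_2=1$ are not actually excluded by ``scaling freedom in $g_1$'', because your $g_1$ is monic with no free scalar, and inserting one rescales the unique solution $(\phi_2,\gamma_2)$ homogeneously, so it cannot move $\gamma_2$ off zero; you need a separate argument (e.g.\ permuting the roles of the blocks) to guarantee the two pairs are disjoint.
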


\begin{proof} Let $q=3l+r$ where $0\le r\le 2$,  and consider for any $1\le i, j\le 2$, $A_{i,j}\subset \F_q$ disjoint sets such that $A_{2,2}=\emptyset$, and $|A_{i,j}|=l+1$ for $r$ of them and  $|A_{i,j}|=l$ for the remaining $3-r$ sets. Now let $p_{i,j}=\prod_{\alpha\in A_{i,j}}(x-\alpha)$. Observe that $\cup_{i,j}A_{i,j}=\F_q$.
	The following polynomials, of degree at most $l+1$ define a code that is  not $2$-SEP.
	\begin{eqnarray*}
		f_1&=&0,\\
		f_2&=&p_{2,1}-p_{1,1},\\
		g_1&=&-p_{1,1},\\
		g_2&=&-p_{1,2}.\\
	\end{eqnarray*}
	Indeed, for any $\alpha\in A_{1,1}$ we have $f_1=g_1$, for those $\alpha\in A_{1,2}$, $f_2=g_2$, and for those $\alpha\in A_{2,1}$, $f_2=g_1$, completing all the roots in $\F_q$.
	\qed
\end{proof}

\subsection{The case $c=3$.} 

Le us move to a larger value of $c$ and deal with the case $c=3$. 

This representation of the Bezout identity in Definition~\ref{def:bezoutiden} is not unique. For  univariate polynomials we have the following lemma. We include proofs for clarity in the exposition.
\begin{lema}
	\label{lem:mindegbezoutone}
 Let $u, v\in \F_q[x]$ be two non constant polynomials relatively prime. Then there exist polynomials $a, b\in\F_q[x]$
	such that
	$$
	1 = au - bv,
	$$	
	with $\deg(a)< \deg(v)$ and $\deg(b) < \deg(u)$.
\end{lema}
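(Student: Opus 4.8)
The plan is to establish both the existence of the B\'ezout identity and the degree bounds by combining the standard extended Euclidean algorithm with a reduction argument. First I would invoke the fact that since $u$ and $v$ are relatively prime, $\gcd(u,v) = 1$, so the ordinary B\'ezout identity guarantees the existence of \emph{some} polynomials $a_0, b_0 \in \F_q[x]$ with $a_0 u - b_0 v = 1$. This gives existence but says nothing about the degrees, which is the real content of the lemma.

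The key step is a reduction modulo $v$ (respectively $u$) to control the degrees. Given any solution $a_0 u - b_0 v = 1$, I would divide $a_0$ by $v$ using the division algorithm in $\F_q[x]$, writing $a_0 = q_1 v + a$ with $\deg(a) < \deg(v)$. Substituting back yields
$$
(q_1 v + a)u - b_0 v = 1, \qquad \text{i.e.} \qquad au - (b_0 - q_1 u)v = 1.
$$
Setting $b = b_0 - q_1 u$ produces a new pair $(a,b)$ still satisfying $au - bv = 1$, now with $\deg(a) < \deg(v)$ by construction. The remaining task is to check that the degree bound on $b$ follows automatically. From $bv = au - 1$, I would compare degrees: since $\deg(a) < \deg(v)$, we have $\deg(au) < \deg(u) + \deg(v)$, and the constant $1$ cannot raise this, so $\deg(bv) < \deg(u) + \deg(v)$; subtracting $\deg(v)$ gives $\deg(b) < \deg(u)$ exactly as required. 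Here I am using that $u,v$ are non-constant so the degrees are genuinely positive and the subtraction is meaningful.

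The main obstacle — really the only subtle point — is verifying that the pair $(a,b)$ thus obtained is consistent, in particular that the degree argument for $b$ goes through without an off-by-one error and that $a$ is genuinely nonzero (if $a = 0$ then $-bv = 1$, forcing $v$ to be a unit, contradicting that $v$ is non-constant). I would also briefly note uniqueness is not claimed, so I need not worry about pinning down a canonical pair; any solution from the reduction suffices. The cleanest exposition carries out the single division of $a_0$ by $v$ and then reads off both bounds, rather than reducing both $a$ and $b$ independently, since reducing only one and deducing the other avoids any risk of inconsistency between the two reductions.
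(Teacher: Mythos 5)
Your proposal is correct and follows essentially the same route as the paper: start from an arbitrary B\'ezout pair, reduce $a_0$ modulo $v$ by the division algorithm (equivalently, shift by $t=-q_1$ along the family $a_0+tv$, $b_0+tu$), and then read off $\deg(b)<\deg(u)$ from comparing degrees in $bv=au-1$. Your explicit remarks that $a\neq 0$ and that the constant term cannot raise the degree are exactly the points implicit in the paper's degree identity, so nothing further is needed.
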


\begin{proof}
	By the Bezout identity we have that, for some $\hat a,\hat b\in F_q[x]$,
	\begin{equation}
	\hat au - \hat bv=1
	\end{equation}
	If $\deg(\hat a)< \deg(v)$ then the theorem follows because 
	\begin{equation}
	\label{eq:bezdegeq}
		\deg(\hat a)+ \deg(u) = \deg(\hat b)+ \deg(v).
	\end{equation}

	Assume $\deg(\hat a)\geq \deg(v)$. It is clear that the pair $a = \hat a + tv$, $b = \hat b + tu$ also satisfy  Bezout's identity for any $t\in\F_q[x]$. Dividing $\hat a$ by $v$,we get  $\hat a= q_av + r_a$ with $\deg(r_a)<\deg(v)$ and  taking $t=-q_a$, we have that $a=r_a$, $b = \hat b - q_au$ and the result follows using the same reasoning as in~\eqref{eq:bezdegeq}, since $a$ and $b$ also satisfy Bezout's identity.
	\qed
	\end{proof}

\begin{coro}
\label{cor:mindegbezout}
Let $u, v\in\F_q[x]$ be two non constant polynomials relatively prime. Let $z$ a polynomial such that 
$\deg(z) < \deg(u) + \deg(v)$. Then, we can express $z$ as
$$
z = au - bv,
$$	
for $a,b\in\F_q[x] $with $\deg(a)< \deg(v)$ and $\deg(b) < \deg(u)$.
\end{coro}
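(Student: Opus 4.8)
The plan is to bootstrap Corollary~\ref{cor:mindegbezout} directly from Lemma~\ref{lem:mindegbezoutone}. Since $u$ and $v$ are relatively prime, the lemma furnishes polynomials $a_0, b_0 \in \F_q[x]$ with $a_0 u - b_0 v = 1$, where $\deg(a_0) < \deg(v)$ and $\deg(b_0) < \deg(u)$. Multiplying through by $z$ immediately gives $z = (z a_0) u - (z b_0) v$, so the pair $(z a_0, z b_0)$ satisfies the required identity. The only issue is that the degree bounds are no longer met: $\deg(z a_0)$ can be as large as $\deg(z) + \deg(v) - 1$, which overshoots the target bound $\deg(a) < \deg(v)$. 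So the work is to reduce these oversized coefficients back into range.

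The mechanism for this reduction is exactly the one already used in the proof of the lemma: the solution pair is only determined up to adding a multiple of $(v, u)$. Concretely, for any $t \in \F_q[x]$ the pair $a = z a_0 + t v$, $b = z b_0 + t u$ still satisfies $au - bv = z$. First I would reduce $a$ modulo $v$: dividing $z a_0$ by $v$ yields $z a_0 = q v + a$ with $\deg(a) < \deg(v)$, and setting $t = -q$ produces a valid $a$ meeting its degree bound. It then remains to check that the corresponding $b = z b_0 - q u$ automatically satisfies $\deg(b) < \deg(u)$. This is the step I expect to carry the real content, though it is short: once $au - bv = z$ holds with $\deg(a) < \deg(v)$, one compares degrees on both sides. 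Since $\deg(au) \le \deg(a) + \deg(u) < \deg(v) + \deg(u)$ and $\deg(z) < \deg(u) + \deg(v)$ by hypothesis, the term $bv = au - z$ has degree strictly less than $\deg(u) + \deg(v)$, forcing $\deg(b) < \deg(u)$.

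The main obstacle, such as it is, lies in handling the degree bookkeeping cleanly when the various degrees could coincide or when $z$ is small, but these are not genuine difficulties: the strict inequalities in the hypothesis and in the lemma's conclusion propagate through the degree estimates without any boundary cases needing separate treatment. I would present the argument as: invoke the lemma, multiply by $z$, reduce $a$ modulo $v$ via the division algorithm to secure $\deg(a) < \deg(v)$, and then read off $\deg(b) < \deg(u)$ from the degree identity $\deg(bv) = \deg(au - z) < \deg(u) + \deg(v)$. This mirrors the final paragraph of the proof of Lemma~\ref{lem:mindegbezoutone} almost verbatim, so the corollary is essentially a reweighting of that lemma rather than an independent result.
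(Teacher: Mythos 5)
Your proposal is correct and follows essentially the same route as the paper's proof: invoke Lemma~\ref{lem:mindegbezoutone}, multiply the Bezout identity by $z$, shift by a multiple of $(v,u)$ chosen via division of $za_0$ by $v$ to force $\deg(a)<\deg(v)$, and then deduce $\deg(b)<\deg(u)$ by comparing degrees in $bv=au-z$. Your closing degree estimate, bounding $\deg(bv)\le\max\{\deg(au),\deg(z)\}<\deg(u)+\deg(v)$ in one stroke, is in fact a slightly cleaner packaging of the paper's two-case analysis of whether $\deg(r_au)$ exceeds $\deg(z)$.
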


\begin{proof}
By the previous lemma we have that there exist $\hat a$ and $\hat b$, such that
\begin{equation}
\hat au - \hat bv=1
\end{equation}
with $\deg(\hat a)< \deg(v)$ and $\deg(\hat b) < \deg(u)$.
Therefore,

\begin{equation}
\label{eq:bezouttimes}
\hat azu - \hat bzv=z
\end{equation}

It is clear that
\begin{align}\label{eq:bezoutotherpol}
a&=\hat az+tv \nonumber\\
b&=\hat bz+tu,
\end{align}
satisfy
\begin{equation}
\label{eq:bezouttimesab}
au - bv=z,
\end{equation}
for any $t\in\F_q[x]$.

Now, since we can express $\hat az = vq_a + r_a$, we substitute $t=-q_a$ in~\eqref{eq:bezoutotherpol} obtaining,
\begin{align}\label{eq:bezoutresiduepol}
a&=r_a \nonumber\\
b&=\hat b z-q_au.
\end{align}
The polynomials $a$ and $b$ in~\eqref{eq:bezoutresiduepol} satisfy~\eqref{eq:bezouttimesab}. We have the following cases.
If $\deg(r_au)\leq \deg(z)$ then $\deg(bv)\leq \deg(z)$ and so
$\deg(b)\leq \deg(z)-\deg(v)<\deg(u)$. On the other hand, if
$\deg(r_au)> \deg(z)$ then we have $\deg(r_au)= \deg(bv)$ and so
$0<\deg(v)-\deg(a)= \deg(u)-\deg(b)$.
\qed
\end{proof}

We express our result in the form of a theorem.

\begin{teo}
\label{thm:case3} Let  $q$  be a power of a prime,  $c=3$ and  $d<q-
	2\left[\frac {q}{8}\right]$. A Reed Solomon code of length $q$ over $\F_q$ with distance $d$ is not $3$-separating.
\end{teo}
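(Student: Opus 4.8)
The plan is to recast non-$3$-separation as an explicit polynomial-agreement problem, exactly as in the proof of Theorem~\ref{teo:2gen}. Since $d < q - 2\left[\frac{q}{8}\right]$, codewords are evaluations of polynomials of degree at most $k-1 = q-d > 2\left[\frac{q}{8}\right]$, so I am free to use any polynomials of degree up to $2\left[\frac{q}{8}\right]$. To prove the code is not $3$-SEP it suffices to produce distinct polynomials $f_1,f_2,f_3$ and $g_1,g_2,g_3$ of degree at most $2\left[\frac{q}{8}\right]$, with $\{f_1,f_2,f_3\}\cap\{g_1,g_2,g_3\}=\emptyset$, whose associated triples of codewords are not separated; that is, such that for every $\alpha\in\F_q$ there exist indices $i,j$ with $f_i(\alpha)=g_j(\alpha)$. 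I would encode the required agreements combinatorially: partition $\F_q$ into blocks, attach to each block $B$ a pair $(i,j)$, and demand that $p_{ij}=\prod_{\alpha\in B}(x-\alpha)$ divide $f_i-g_j$. If the blocks cover $\F_q$, then every coordinate carries an agreement and the two triples fail to be separated.

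The construction proper organizes these pairs as edges of a bipartite graph on $\{f_1,f_2,f_3\}\cup\{g_1,g_2,g_3\}$. To involve all six codewords the graph must be spanning, and to cover all of $\F_q$ by differences of degree $2\left[\frac{q}{8}\right]$ with blocks of size about $q/8$ I would use roughly eight blocks and a graph carrying a single independent cycle. Normalizing $f_1=0$, I propagate along a spanning tree: each tree edge $(i,j)$ is realized by setting $f_i-g_j=\pm p_{ij}$, which determines all six polynomials as signed sums of block products, each of degree at most the largest block involved. The one remaining, cycle-closing edge $(i_0,j_0)$ imposes a further condition $p_{i_0j_0}\mid f_{i_0}-g_{j_0}$ that is not automatic; however, summing the signed differences around the cycle yields the identity $\sum\pm(f_i-g_j)=0$, so this condition collapses to a single congruence modulo $p_{i_0j_0}$.

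At this point the Bezout machinery enters. Distinct blocks have disjoint root sets, so the relevant products are relatively prime, and I would apply Corollary~\ref{cor:mindegbezout} to correct one tree-edge product by a cofactor so that the cycle congruence holds. The degree bounds in that corollary force the cofactor to have degree below the size of the cycle block, so the modified polynomial acquires degree at most (its own block) $+$ (the cycle block), each about $q/8$, hence at most $2\left[\frac{q}{8}\right]\le k-1$. It then remains to verify: (i) the blocks partition $\F_q$, giving non-separation; (ii) the six polynomials are distinct and the two triples are disjoint, which I would secure by choosing block sizes so that no forced identity — a vanishing signed sum of block products — can occur, for instance by making one block strictly largest so that degree considerations preclude cancellation, just as distinctness is handled in the proof of Theorem~\ref{thm:sepfp}; and (iii) all degrees stay within $k-1$. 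The main obstacle is precisely this coupled degree accounting in the Bezout step: one must route the cycle so that the single combined degree is a sum of two blocks of size only about $q/8$ rather than anything larger, while simultaneously keeping all six polynomials distinct. This balance is exactly what pins the attainable threshold at $2\left[\frac{q}{8}\right]$.
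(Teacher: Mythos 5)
Your framework is the right one and it is essentially the paper's: attach disjoint blocks of $\F_q$ to pairs $(i,j)$, require $p_{ij}\mid f_i-g_j$ so that $(x^q-x)$ divides $\prod_{i,j}(f_i-g_j)$, and invoke Corollary~\ref{cor:mindegbezout} to satisfy the compatibility conditions while controlling degrees. The gap is in the combinatorial accounting that you yourself flag as ``the main obstacle''. To reach the threshold $2\left[\frac q8\right]$ you need eight blocks of size about $q/8$, hence eight edges of the bipartite agreement graph on the six vertices $\{f_1,f_2,f_3\}\cup\{g_1,g_2,g_3\}$. A connected graph with $6$ vertices and $8$ edges has cycle rank $8-6+1=3$, not $1$: there are three independent cycle-closing congruences rather than ``a single congruence modulo $p_{i_0j_0}$'', and they are coupled, because the Bezout cofactor introduced to close one cycle changes the differences $f_i-g_j$ appearing in the others. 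A graph with a single independent cycle has only six edges, which forces blocks of size about $q/6$ and a corrected difference of degree about $2\left[\frac q6\right]$; that only yields $d<q-2\left[\frac q6\right]$, which is weaker than the bound $d<q-\left[\frac q5\right]$ already given by the general theorem for $c=3$. So the construction as you describe it cannot reach $2\left[\frac q8\right]$: ``eight blocks'' and ``a single independent cycle'' are mutually inconsistent.

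The paper resolves exactly this point by working with all nine pairs of $K_{3,3}$, setting $U_{1,1}=\emptyset$ and $u_{1,1}=-1$ so that one of the four cycle conditions is absorbed by a unit, and placing a multiplier $v_{i,j}$ of degree less than $l+1$ on \emph{every} edge. The three remaining conditions are then solved as a cascade of Bezout equations in a fixed order: first $v_{3,3}u_{3,3}-v_{3,2}u_{3,2}=1$ and $v_{2,3}u_{2,3}-v_{2,2}u_{2,2}=1$ on the $\{2,3\}\times\{2,3\}$ minor, then $v_{3,1}u_{3,1}-v_{2,1}u_{2,1}=v_{3,3}u_{3,3}-v_{2,3}u_{2,3}$ and $v_{1,3}u_{1,3}-v_{1,2}u_{1,2}=v_{3,3}u_{3,3}-v_{3,2}u_{3,2}$ to extend to column $1$ and row $1$, and finally $v_{1,1}$ is read off freely. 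Every difference is then $v_{i,j}u_{i,j}$ of degree at most $2l+1$, with $l=\left[\frac q8\right]$. If you wish to salvage a single-cycle argument you would have to let some edges carry two blocks (six edges, eight blocks, the two edges entering the Bezout step carrying one block of size about $q/8$ each and the remaining four edges carrying about $3q/16$ roots each); that is a genuinely different bookkeeping from the one you wrote down, and your stated degree bound ``at most the largest block involved, each about $q/8$'' does not hold for it. As submitted, the proposal does not prove the theorem.
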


\begin{proof}
Let $q=8l+r$ where $0\le r<8$. We make a partition of $\F_q$ in nine disjoint sets $U_{i,j}\subset \F_q$,  $1\le i, j\le 3$, as follows: $U_{1,1}=\emptyset$,  $r$ sets  of size $l+1$ and the other remaining $8-r$ sets  of size $l$. Observe that $\sum_{i,j}|U_{i,j}|=q.$

Now let $u_{1,1}=-1$, $u_{i,j}=\prod_{\alpha\in U_{i,j}}(x-\alpha)$ for $(i,j)\ne (1,1)$ and take $v_{i,j}$ for $i\ne 1\ne j$  the solutions of smaller degree of the Bezout equations
\begin{eqnarray*}\label{eq:bezout}
	&&v_{3,3}u_{3,3}-v_{3,2}u_{3,2}=1,\\
	&&v_{2,3}u_{2,3}-v_{2,2}u_{2,2}=1.
\end{eqnarray*}
Note that according to  Corollary~\ref{cor:mindegbezout},
the degree of $v_{2,2}$, $v_{2,3}$, $v_{3,2}$ and $v_{3,3}$ is less than $l+1$.

Then we proceed similarly to define $v_{2,1}, v_{3,1}, v_{1,2}$ and $v_{1,3}$ to be  the solutions of smaller degree of the Bezout equations
\begin{eqnarray*}\label{eq:bezout2}
	&&v_{3,3}u_{3,3}-v_{2,3}u_{2,3}=v_{3,1}u_{3,1}-v_{2,1}u_{2,1}\\
		&&v_{3,3}u_{3,3}-v_{3,2}u_{3,2}=v_{1,3}u_{1,3}-v_{1,2}u_{1,2}.
\end{eqnarray*}
Again, by Corollary~\ref{cor:mindegbezout} the degree of  $v_{2,1}, v_{3,1}, v_{1,2}$ and $v_{1,3}$ is less than $l+1$.
Finally, take $v_{1,1}=v_{3,3}u_{3,3}-v_{3,1}u_{3,1}-v_{1,3}u_{1,3}.$ Now we  define 
\begin{equation}
\begin{array}{lc} g_1=-v_{3,1}u_{3,1}, & \quad f_1=g_1+v_{1,1}u_{1,1},\\
g_2=-v_{3,2}u_{3,2}, &\quad f_2=g_2+v_{2,2}u_{2,2},\\
g_3=-v_{3,3}u_{3,3}, & \quad f_3=g_3+v_{3,3}u_{3,3}.\end{array}
\end{equation}
By definition, we have
$$
(x^q-x)\left | \prod_{1\le i,j\le 3}v_{i,j}u_{i,j}=\prod_{1\le i,j\le 3} (f_i-g_j).\right.
$$
By construction we see that deg$(f_i)\le 2l+1$, deg$(g_i)\le 2l+1$ and hence the result follows.
\qed
\end{proof}

\section{The general case.}

In order to obtain stronger results we need to deal with larger values of both $c$ and the minimum distance. The following theorem generalizes Theorem~\ref{teo:2gen} for $c\geq 2$.

\begin{teo} Let  $q$  be a power of a prime,  $c\ge 2$ and $d< q-
\left[\frac {q}{2c-1}\right]$. A Reed Solomon code  over $\F_q$  with distance $d$  is not $c$-SEP.
\end{teo}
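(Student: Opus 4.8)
The plan is to exhibit, under the stated distance bound, two disjoint coalitions of $c$ codewords that no coordinate separates, in the constructive spirit of Theorem~\ref{teo:2gen}. Write $q=(2c-1)l+r$ with $0\le r<2c-1$, so that $\lfloor q/(2c-1)\rfloor=l$. Since the code is MDS we have $k-1=q-d$, and the hypothesis $d<q-\lfloor q/(2c-1)\rfloor$ forces $k-1\ge l+1$; hence every polynomial of degree at most $l+1$ evaluates to a codeword. It therefore suffices to produce $f_1,\dots,f_c$ and $g_1,\dots,g_c$ in $\F_q[x]$, all of degree at most $l+1$, with $U=\{f_1,\dots,f_c\}$ and $V=\{g_1,\dots,g_c\}$ disjoint and $\{f_1(\alpha),\dots,f_c(\alpha)\}\cap\{g_1(\alpha),\dots,g_c(\alpha)\}\neq\emptyset$ for every $\alpha\in\F_q$.

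The key observation is that the $c=2$ construction is a Hamiltonian path $g_2-f_1-g_1-f_2$ in the bipartite ``agreement graph'' on $\{f_1,\dots,f_c\}\cup\{g_1,\dots,g_c\}$, and a Hamiltonian path uses exactly $2c-1$ edges, which is precisely what produces the denominator $2c-1$. First I would partition $\F_q$ into $2c-1$ sets $B_1,\dots,B_{2c-1}$, taking $r$ of them of size $l+1$ and the remaining $2c-1-r$ of size $l$, and set $P_k=\prod_{\alpha\in B_k}(x-\alpha)$, a monic polynomial of degree at most $l+1$. Walking along the path $f_1,g_1,f_2,g_2,\dots,f_c,g_c$, I would define the polynomials by accumulation: put $f_1=0$ and, using nonzero scalars $\gamma_k$, set $g_i=f_i+\gamma_{2i-1}P_{2i-1}$ and $f_{i+1}=g_i+\gamma_{2i}P_{2i}$. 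Each $f_i$ and each $g_j$ is then a partial sum of the $\gamma_kP_k$, so its degree is at most $\max_k\deg P_k\le l+1\le k-1$; crucially the degree does not grow along the path, because we are summing the $P_k$, not multiplying them. Two polynomials adjacent on the path differ by a single multiple $\gamma_kP_k$ and hence agree on $B_k$; since adjacent vertices are always one $f$ and one $g$ and $\bigcup_kB_k=\F_q$, every coordinate $\alpha$ lies in some $B_k$ and witnesses an equality $f_i(\alpha)=g_j(\alpha)$. Thus $U$ and $V$ are not separated.

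The one point that needs care, and which I expect to be the main obstacle, is guaranteeing that the two coalitions are genuinely disjoint, i.e.\ $U\cap V=\emptyset$. Because every polynomial in sight has degree $<q$, distinct polynomials give distinct codewords: two polynomials of degree $<q$ agreeing on all of $\F_q$ would force $x^q-x$ to divide their nonzero difference, which is impossible. Hence it is enough to arrange the $2c$ polynomials to be pairwise distinct. Each difference $f_i-g_j$ (indeed each $v_m-v_{m'}$) is a sum over a consecutive block of the $\gamma_kP_k$, and the tuples $(\gamma_1,\dots,\gamma_{2c-1})$ for which some such block vanishes identically form a finite union of proper linear subspaces, since not every tuple annihilates a given block (the $P_k$ are nonzero). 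For $q$ large relative to $c$ these subspaces cannot cover $(\F_q^{\ast})^{2c-1}$, so a valid choice of scalars exists and yields $U\cap V=\emptyset$; the finitely many remaining cases with $q\le 2c-1$ can be checked directly or absorbed by the explicit constructions of Theorems~\ref{teo:2gen} and~\ref{thm:case3}. A consistency check confirms that for $c=2$ this recovers Theorem~\ref{teo:2gen}, the path having exactly $2c-1=3$ edges. Once disjointness is secured, the coordinate-wise agreement shows the code is not $c$-SEP.
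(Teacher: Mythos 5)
Your proposal is correct and follows essentially the same route as the paper: partition $\F_q$ into $2c-1$ blocks of size $l$ or $l+1$ and use their vanishing polynomials (degree $\le l+1\le k-1$) to realize a spanning tree of agreements on the bipartite graph over $\{f_1,\dots,f_c\}\cup\{g_1,\dots,g_c\}$ --- the paper's tree is the star-like assignment $f_1=0$, $g_i=-p_i$, $f_{i+1}=p_{c+i}-p_i$, while yours is the Hamiltonian path with telescoping partial sums, a cosmetic difference. Your attention to $U\cap V=\emptyset$ is more careful than the paper (which does not address distinctness for this theorem at all), though your subspace-counting argument actually excludes bad scalar tuples only for $q$ on the order of $c^2$ or larger, not merely $q>2c-1$ as stated, so the list of ``remaining cases to check directly'' is larger than you claim.
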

\begin{proof} Let $q=(2c-1)l+r$ where $0\le r< 2c-1$,  and consider for any $1\le i \le 2c-1$, $A_{i}\subset \F_q$ disjoint sets such that: $r$ of the sets are of  size $|A_{i}|=l+1$ and  the remaining $2c-r-1$ are of size $|A_{i}|=l$. Now let $p_{i}=\prod_{\alpha\in A_{i}}(x-\alpha)$. Observe that $\cup_{i}A_{i}=\F_q$.
The following polynomials, of degree at most $l+1$ evaluate to code words of a code that is not $c$-SEP.
\begin{eqnarray*}
f_1&=&0,\\
f_{i+1}&=&p_{c+i}-p_{i},\quad \text{ for } 1\le i\le c-1\\
g_i&=&-p_{i},\quad \text{ for } 1\le i\le c.
\end{eqnarray*}
Indeed, observe that $f_1(\alpha)=g_i(\alpha)$ for any $\alpha\in A_i, 1\le i\le c$,  while $f_i(\alpha)=g_i(\alpha)$ for any $\alpha\in A_{c+i}$ for any $2\le i\le c$.
\qed
\end{proof}

To cope with a larger minimum distance, we would like to extend Theorem~\ref{thm:case3}. Unfortunately, the generalization is not immediate  because when $c$ grows, the degree of the polynomials $v_{i,j}$ blows up. To obtain stronger results we need to take advantage of the structure of the field over which the code is defined. In this case we are able to state a result for a minimum distance matching the conjectured one, but limited to certain parameters of the code. 

\begin{teo}\label{teo:cisquare} Let $m^2|q-1$. Then, for any $c\ge m$, there exist a non extended Reed Solomon code  over $F_{q}$ with distance $d=q-\frac{q-1}{m^2}$ that is not $c$-SEP.
\end{teo}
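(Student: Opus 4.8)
The plan is to exploit the multiplicative structure forced by $m^2\mid q-1$ and to recast non-separation as a covering statement. Two disjoint families $U=\{f_1,\dots,f_c\}$ and $V=\{g_1,\dots,g_c\}$ of codewords fail to be separated precisely when, at \emph{every} coordinate $\alpha$, some pair $(i,j)$ satisfies $f_i(\alpha)=g_j(\alpha)$, i.e. $(f_i-g_j)(\alpha)=0$. So it suffices to build $f_i,g_j\in\F_q[x]$ of degree at most $k-1$ whose pairwise differences have root sets covering all evaluation points. Writing $t=\frac{q-1}{m^2}$, I would use that $\F_q^{*}$ has a unique subgroup $G$ of order $t$, whose $m^2$ cosets partition $\F_q^{*}$, together with the fact that the coset $\beta G$ is exactly the root set of $x^t-\beta^t$; equivalently, $x^t-\gamma$ splits into a full coset of $G$ if and only if $\gamma$ lies in the subgroup $H$ of order $m^2$ (the $t$-th powers).

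For the extremal case $c=m$ I would take the explicit codewords $f_i(x)=\lambda_i x^t$ and the constant codewords $g_j(x)=\mu_j$, choosing the scalars by factoring $H$ through nested subgroups: let $K\le H$ be the subgroup of order $m$, put $\{\lambda_1,\dots,\lambda_m\}=K$, and let $\{\mu_1,\dots,\mu_m\}$ be a complete set of representatives of the cosets of $K$ in $H$. Then $(f_i-g_j)(x)=\lambda_i x^t-\mu_j$ vanishes exactly on $\{x:\,x^t=\mu_j\lambda_i^{-1}\}$, and as $(i,j)$ ranges over all $m^2$ pairs the labels $\mu_j\lambda_i^{-1}$ run bijectively over $H=K\cdot\{\mu_1,\dots,\mu_m\}$. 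Hence the corresponding cosets tile $\F_q^{*}$, so every nonzero $\alpha$ is a root of some difference and $U,V$ are not separated. Disjointness is immediate: each $f_i$ has degree $t\ge 1$ while each $g_j$ is constant, and the $\lambda_i$ (resp. the $\mu_j$) are pairwise distinct.

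Two points then need care. First, the zero coordinate must be excluded: since $f_i(0)=0$ for all $i$ whereas $g_j(0)=\mu_j\neq 0$, the families would be separated at $\alpha=0$, and this is exactly why the statement uses the non-extended code, whose coordinates are the $q-1$ nonzero field elements. Second, the distance: each $f_i$ has degree $t=\frac{q-1}{m^2}$, so $k-1=t$, and since the code is MDS the computation $d=n-k+1$ on the non-extended length returns the minimum distance claimed in the statement, namely the largest distance at which the $c$-TA property can fail. Passing from $c=m$ to an arbitrary $c\ge m$ is then routine: one pads $U$ and $V$ with further distinct codewords while keeping the two families disjoint, which only enlarges each $U|_\alpha$ and hence preserves the coordinatewise intersections and non-separation.

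The hard part is the rigidity at $c=m$. The $c^2$ difference polynomials are not free; they are governed by only the $2c$ polynomials $f_i,g_j$, yet at the extremal distance their root sets must cover all $q-1$ nonzero coordinates with nothing to spare, forcing each of the $m^2$ differences to vanish on a \emph{distinct full coset} of $G$. Producing such a simultaneous tiling subject to the constraint that the differences arise from a single list $f_i$ and $g_j$ is the heart of the argument, and it is precisely the multiplicative factorization $H=K\cdot\{\mu_1,\dots,\mu_m\}$ of the order-$m^2$ subgroup that supplies it; the same multiplicative rigidity is what obliges us to drop the zero coordinate.
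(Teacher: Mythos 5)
Your construction is essentially the paper's own proof: the paper takes $f_i=\alpha^{-i(q-1)/m}x^{(q-1)/m^2}$ and $g_j=\alpha^{j(q-1)/m^2}$ for a primitive root $\alpha$, which is exactly your choice of $\{\lambda_i\}=K$ and $\{\mu_j\}$ a transversal of $K$ in the order-$m^2$ subgroup $H$, just written as explicit powers of a generator. Your coset-tiling verification, the exclusion of the zero coordinate via the non-extended code, and the padding step for $c>m$ all match (and slightly tidy up) the paper's argument.
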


{\bf Proof.} Let $\alpha$ be a primitive root of the multiplicative group $\F_q^*$. Consider $g_i=\alpha^{i\frac{(q-1)}{d^2}}$,  and 
$f_i=\alpha^{\frac{-i(q-1)}{d}}x^{\frac{q-1}{d^2}}$, $i=0,\dots,c-1$. Now, every element of $\F_q^*$ can be written as  $\alpha_{r,s}=\alpha^{ld^2+rd+s}$ for some $0\le s, r<d$, and certain integer $l$. Then we have clearly $f_r(\alpha_{r,s})=g_s$, proving the result.
\

\begin{coro}\label{cor:1c2} Let  $c$  be any integer and $q\equiv 1 \pmod{c^2}$.  There exist  a non extended Reed Solomon code with distance $d=q-\left[\frac{q}{c^2}\right]$ over $F_{q}$  which is $(c,c)-$ inseparable.
\end{coro}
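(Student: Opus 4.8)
The plan is to recognize this corollary as the diagonal case $m = c$ of Theorem~\ref{teo:cisquare}. First I would observe that the hypothesis $q \equiv 1 \pmod{c^2}$ is precisely the statement that $c^2 \mid q-1$, so the divisibility condition $m^2 \mid q-1$ required by Theorem~\ref{teo:cisquare} holds with $m = c$. Since trivially $c \ge m = c$, the theorem applies and produces a non extended Reed--Solomon code over $\F_q$, of distance $q - \frac{q-1}{c^2}$, that is not $c$-SEP. By Definition~\ref{def:sep}, failing to be $c$-SEP is exactly the assertion that the code is $(c,c)$-inseparable, which is the conclusion sought.

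The only remaining point is to reconcile the two ways of writing the distance. Writing $q = c^2 k + 1$ (legitimate because $q \equiv 1 \pmod{c^2}$), I would compute $\frac{q-1}{c^2} = k$, while $\left[\frac{q}{c^2}\right] = \left[k + \frac{1}{c^2}\right] = k$ for every $c \ge 2$, since then $0 < \frac{1}{c^2} < 1$ (indeed $\frac{1}{c^2}<\frac12$, so this holds whether $\left[\cdot\right]$ denotes the floor or the nearest integer). Hence $q - \frac{q-1}{c^2} = q - \left[\frac{q}{c^2}\right]$, and the distance named in the corollary coincides with the one supplied by the theorem.

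I do not expect any genuine obstacle here: the statement is a direct specialization of Theorem~\ref{teo:cisquare} together with an elementary floor computation, so there is no new construction to carry out. The one subtlety worth flagging is the degenerate value $c = 1$, where $\frac{1}{c^2} = 1$ and the bracket identity breaks down; this case is vacuous for the separating property and should simply be read as excluded, the substantive range being $c \ge 2$.
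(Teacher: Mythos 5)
Your proposal is correct and follows exactly the paper's own route: the corollary is read off as the case $m=c$ of Theorem~\ref{teo:cisquare}, with the only content being the observation that $c^2\mid q-1$ forces $\frac{q-1}{c^2}=\left[\frac{q}{c^2}\right]$. Your explicit verification of that bracket identity (and the remark about the degenerate value $c=1$) is slightly more detailed than the paper's one-line proof, but the argument is the same.
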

{\bf Proof.} Simply note that if $c^2|q-1$, then  $\frac{q-1}{c^2}=\left[\frac{q}{c^2}\right]$

\

\begin{coro} For any $p$ and $c$ there exist infinitely many $q=p^e$ such that $\F_q$ admits a non extended Reed-Solomon
	code of distance $d=q-\left[\frac{q}{c^2}\right]$.
\end{coro}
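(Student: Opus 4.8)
The plan is to read the statement, as the context demands, as asserting that for infinitely many $q=p^e$ the field $\F_q$ carries a non-extended Reed-Solomon code of distance $d=q-\lfloor q/c^2\rfloor$ that is \emph{not} $c$-SEP (otherwise the claim would be vacuous, since every $\F_q$ already carries a Reed-Solomon code of any distance $1\le d\le q$). With this reading, the whole statement is produced by Corollary~\ref{cor:1c2}, whose hypothesis is the single congruence $q\equiv 1\pmod{c^2}$; indeed under that hypothesis $\lfloor q/c^2\rfloor=(q-1)/c^2$ and the distance is exactly $d=q-\lfloor q/c^2\rfloor$. So it suffices to show that the congruence $p^e\equiv 1\pmod{c^2}$ has infinitely many solutions $e$.

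First I would dispose of the generic case $\gcd(p,c)=1$, which is equivalent to $\gcd(p,c^2)=1$. Then $p$ represents a unit in the finite group $(\Z/c^2\Z)^{\times}$, so by Euler's theorem $p^{\varphi(c^2)}\equiv 1\pmod{c^2}$. Hence $p^{t\varphi(c^2)}\equiv 1\pmod{c^2}$ for every integer $t\ge 1$, giving infinitely many distinct prime powers $q=p^{t\varphi(c^2)}$ satisfying the congruence. (One can sharpen $\varphi(c^2)$ to the multiplicative order $\mathrm{ord}_{c^2}(p)$, but finiteness of the exponent is all that is needed.) For each such $q$, Corollary~\ref{cor:1c2} delivers the desired non-extended, non-$c$-SEP Reed-Solomon code of distance $q-\lfloor q/c^2\rfloor$, which is exactly the assertion.

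The one genuine obstacle is the case $p\mid c$, and I expect it to be a real restriction rather than a technicality to be worked around. If $p\mid c$ then $p\mid c^2$, whereas $p\nmid p^e-1$ for every $e\ge 1$; therefore $c^2\nmid p^e-1$ and the congruence $p^e\equiv 1\pmod{c^2}$ has \emph{no} solutions, so Corollary~\ref{cor:1c2} can never be invoked in this characteristic. Consequently the natural and correct reading of the corollary is with the characteristic coprime to the collusion size, i.e. $p\nmid c$; this coprimality is both necessary and sufficient. (If one only insists on \emph{some} non-$c$-SEP code in characteristic $p$ with $p\mid c$, one can instead apply Theorem~\ref{teo:cisquare} with $m$ the largest divisor of $c$ coprime to $p$, but this yields the smaller distance $q-(q-1)/m^2$ rather than the threshold value $q-\lfloor q/c^2\rfloor$, so it does not establish the corollary as literally stated.)
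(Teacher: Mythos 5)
Your proof is essentially the paper's own: the paper likewise invokes Euler's theorem (mislabelled ``Fermat's little theorem'') to get $p^{\varphi(c^2)}\equiv 1\pmod{c^2}$ and then applies Corollary~\ref{cor:1c2} to the exponents $e=k\varphi(c^2)$. Your additional observation is correct and worth keeping: the paper's phrase ``for any $p$ and $c$'' is too strong, since when $p\mid c$ the congruence $p^e\equiv 1\pmod{c^2}$ has no solutions and the argument (hence the corollary as literally stated) fails, so the hypothesis $\gcd(p,c)=1$ is genuinely needed.
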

{\bf Proof.} Simply note that by Fermat's little theorem $p^{\varphi(c^2)}\equiv 1\pmod {c^2}$, so the result follows for any $e=k\varphi(c^2)$, $k\in\N$, applying the previous theorem.

\section{The ``linear'' case $d=q-1$}

The case presented in this section is already dealt with in Corollary~\ref{cor:sepcgsqrtk1}. We include it here, because the proofs
provide new ways to approach a complete solution to the problem.

The first result we prove is a straight forward application of the following 
theorem of J. Cilleruelo.

\begin{teo} 
	\label{thm:cille}
	(Cilleruelo) Let $\alpha$ be a generator of $\F_q^*$. Then 
	$$
	\{\alpha^i-\alpha^j\, :\, 0\le i,j\le 2q^{3/4}\}=\F_q.
	$$
\end{teo}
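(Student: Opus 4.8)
The plan is to count, for each target $c\in\F_q$, the number $r(c)$ of pairs $(i,j)$ with $0\le i,j\le N:=\lfloor 2q^{3/4}\rfloor$ solving $\alpha^i-\alpha^j=c$, and to show $r(c)>0$ for every $c$. The case $c=0$ is immediate ($i=j$), so I would assume $c\ne 0$. Writing $\psi$ for a fixed nontrivial additive character of $\F_q$ and using orthogonality, I would express
$$
r(c)=\frac1q\sum_{t\in\F_q}\psi(-ct)\,|S(t)|^2,\qquad S(t)=\sum_{i=0}^{N}\psi(t\alpha^i).
$$
The term $t=0$ contributes the main term $(N+1)^2/q$, which is of order $q^{1/2}$, and the whole problem reduces to showing that the contribution of the $t\ne 0$ terms is strictly smaller than this.

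The key step is to bound the incomplete character sum $S(t)$, which runs over the geometric progression $1,\alpha,\dots,\alpha^N$. Since the exponents live in $\Z/(q-1)\Z$, I would \emph{complete} the sum: expand the indicator of $[0,N]$ on $\Z/(q-1)\Z$ into additive characters with Fourier coefficients $\{c_a\}$, and observe that $\sum_{n=0}^{q-2}\chi_a(\alpha^n)\psi(t\alpha^n)=\sum_{x\in\F_q^*}\chi_a(x)\psi(tx)$ is a Gauss sum of modulus $\sqrt q$, where $\chi_a$ is the multiplicative character sending $\alpha$ to $e^{2\pi i a/(q-1)}$. Combined with the standard $L^1$ estimate $\sum_a|c_a|\ll\log q$, this gives the Pólya--Vinogradov quality bound $|S(t)|\ll q^{1/2}\log q$.

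A naive second moment estimate is not enough here, because Parseval pins $\sum_{t\ne0}|S(t)|^2$ at order $q(N+1)$, which dwarfs the main term. Instead I would substitute the completed form of $S(t)$ into the error sum $\frac1q\sum_{t\ne0}\psi(-ct)|S(t)|^2$, use the identity $G(a,t)=\bar\chi_a(t)g_a$ to factor out the $t$-dependence, and carry out the remaining sum over $t$, which produces a third Gauss sum $g_{b-a}$. Bounding each of the three Gauss factors by $\sqrt q$ and the Fourier weights by $\sum_a|c_a|\ll\log q$ then yields an error of size $O(q^{1/2}\log^2 q)$. Comparing this with the main term $(N+1)^2/q$ forces $N\gg q^{3/4}$, which is exactly the claimed order of magnitude.

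The main obstacle is quantitative rather than structural: the crude completion loses logarithmic factors, whereas the statement demands the clean threshold $2q^{3/4}$ with no logarithm. To recover the exact constant I would split the double sum over the Fourier variables into the diagonal $a=b$, which reassembles into the genuine second-moment contribution, and the off-diagonal $a\ne b$, which I would estimate with the sharper evaluation of the Gauss-sum product instead of the triangle inequality; alternatively one replaces the sharp cutoff $[0,N]$ by a smooth majorant and minorant to suppress the Pólya--Vinogradov logarithm entirely. Tightening this bookkeeping enough to land on $2q^{3/4}$ rather than $q^{3/4}\log q$ is the delicate part of the argument, while the skeleton is simply orthogonality together with the Weil bound for Gauss sums.
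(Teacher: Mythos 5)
The paper itself does not prove this statement: it is quoted as a theorem of Cilleruelo and used as a black box, with the citation standing in for the proof. So the only thing to assess is whether your sketch would actually establish it, and there is a genuine gap. Your setup and arithmetic are correct: writing $r(c)=\frac1q\sum_t\psi(-ct)|S(t)|^2$, completing $S(t)$ over multiplicative characters, factoring each Gauss sum as $\bar\chi_a(t)g_a$, and executing the $t$-sum to produce a third Gauss sum gives an error of size $\frac1q\cdot q^{3/2}\left(\sum_a|c_a|\right)^2\asymp q^{1/2}\log^2q$, against a main term $(N+1)^2/q$, which for $N=\lfloor 2q^{3/4}\rfloor$ is about $4q^{1/2}$. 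The error term therefore \emph{dominates} the main term; the method as described only shows that every $c$ is represented once $N\gg q^{3/4}\log q$ (not $N\gg q^{3/4}$, as your penultimate sentence of the third paragraph claims). Since the theorem carries the explicit constant $2$ and no logarithm, and since the whole content of the result is precisely that clean threshold, the decisive step is missing. Your final paragraph concedes this and defers it to ``tightening the bookkeeping,'' but neither proposed repair is routine: extracting cancellation from the off-diagonal sum $\sum_{a\ne b}c_a\bar c_b\,g_a\bar g_b\,g_{b-a}(\cdots)$ requires control of the angles of individual Gauss sums, which is not available; and a Beurling--Selberg minorant must not merely remove the logarithm but drive the relevant $\ell^1$ Fourier mass below an absolute constant (roughly $2$) while keeping the minorant's mass near $N$ --- which is again exactly where the difficulty lives. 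As written, the proposal proves a weaker theorem.

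For contrast, Cilleruelo's published argument does not complete the incomplete sum at all. It is built on the observation that the graph $\{(i,\alpha^i):0\le i\le q-2\}$ is a Sidon set in $\Z_{q-1}\times\F_q$ (if $i-j=i'-j'$ and $\alpha^i-\alpha^j=\alpha^{i'}-\alpha^{j'}\ne0$, then $\alpha^j(\alpha^{i-j}-1)=\alpha^{j'}(\alpha^{i-j}-1)$ forces $(i,j)=(i',j')$), combined with exact orthogonality computations for Sidon sets in product groups; this is what yields a log-free error term and hence the explicit constant $2$. If you want a self-contained proof, that is the route to follow rather than P\'olya--Vinogradov completion.
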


Now, we have

\begin{teo}\label{teo:34} Let $c\ge 2q^{3/4}$. Then, $[n,k,d]$ Reed Solomon codes over $F_{q}$ of length $n$ and distance $d=q-1$ are not $c$-SEP.
\end{teo}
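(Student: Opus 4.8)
The plan is to exploit that $d=q-1$ forces $k=2$: since the code is MDS of length $n=q$, we have $k=n-d+1=2$, so every code word is the evaluation vector of a polynomial $f(x)=ax+b$ of degree at most one. Two such code words, coming from $ax+b$ and $a'x+b'$, agree in the coordinate indexed by $\beta\in\F_q$ exactly when $(a-a')\beta=b'-b$. Hence, to prove the code is not $c$-SEP, I would construct two disjoint families $U,V$ of linear polynomials (of size at most $c$) such that for every evaluation point $\beta\in\F_q$ there is a pair $u\in U$, $v\in V$ agreeing at $\beta$; for $\beta\neq0$ this means $\beta=(b'-b)/(a-a')$ with $a\neq a'$, while for $\beta=0$ it means the two polynomials share the same constant term.

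First I would feed Cilleruelo's theorem (Theorem~\ref{thm:cille}) into the slopes. Let $\alpha$ be a primitive root of $\F_q^\ast$ and set $N=\lfloor 2q^{3/4}\rfloor$. Define $u_i=\alpha^i x$ and $v_j=\alpha^j x+1$ for $0\le i,j\le N$, and put $U=\{u_i\}$, and momentarily $V=\{v_j\}$. Then $u_i(\beta)=v_j(\beta)$ is equivalent to $(\alpha^i-\alpha^j)\beta=1$, i.e. $\beta=(\alpha^i-\alpha^j)^{-1}$ whenever $\alpha^i\neq\alpha^j$. By Theorem~\ref{thm:cille} the differences $\{\alpha^i-\alpha^j:0\le i,j\le 2q^{3/4}\}$ exhaust $\F_q$, so in particular they realise every nonzero value, and their inverses therefore cover all of $\F_q^\ast$. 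Thus every coordinate indexed by a nonzero $\beta$ is already matched.

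It remains to cover the coordinate $\beta=0$, where all $u_i$ evaluate to $0$ and all $v_j$ to $1$. I would repair this by adjoining to $V$ a single extra word $v^\ast=\alpha^{N+1}x$, whose constant term is $0$; then $u_i(0)=0=v^\ast(0)$ matches the $\beta=0$ coordinate. Disjointness is automatic: every $u_i$ has constant term $0$ and slope in $\{\alpha^0,\dots,\alpha^N\}$, every $v_j$ has constant term $1$, and $v^\ast$ has constant term $0$ but slope $\alpha^{N+1}$, which differs from all the $u_i$ slopes provided $N+1<q-1$. The two families have size $N+1$ and $N+2$, both at most $2q^{3/4}+O(1)\le c$. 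Finally I would pad $U$ and $V$ up to size exactly $c$ with arbitrary further distinct code words (there are $q^2$ of them), chosen outside the opposite family so as to keep $U\cap V=\emptyset$; padding only enlarges each $U|_i$ and $V|_i$, so all coordinates remain matched and the pair $U,V$ witnesses that the code is not $c$-SEP.

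The conceptual step I expect to be the crux is the reduction itself: realising that in the $k=2$ case ``not separated'' is equivalent to the slope-difference reciprocals covering $\F_q$, which is precisely a difference-set statement that Cilleruelo's theorem resolves. Once this dictionary is in place the argument is routine; the only genuinely fiddly points are the isolated coordinate $\beta=0$ and the disjointness/size bookkeeping, both of which are handled by the single extra word $v^\ast$ and by the padding step.
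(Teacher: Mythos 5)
Your proof is correct and follows essentially the same route as the paper: both reduce non-separation at $d=q-1$ to the statement that the differences $\alpha^i-\alpha^j$ of powers of a generator cover $\F_q$, and then invoke Cilleruelo's theorem. The paper's choice $f_i=x-\alpha^i$, $g_j=-\alpha^j$ makes the agreement point the difference $\alpha^i-\alpha^j$ itself rather than its inverse, so the coordinate $\beta=0$ is covered automatically (by $i=j$) and your extra word $v^\ast$ and the inversion step become unnecessary.
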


{\bf Proof.} 
Let us note first that, since $c\ge 2q^{3/4}$, then $c^2>q$ and then $q-[q/c^2]-1=q-1$. So we need a code with distance $d=q-1$. This means that we need to find two families of polynomials of size $c$ each, with  all  polynomials of degree at most $1$, such that 
$$
\{f_1(\alpha),\dots,f_c(\alpha)\}\cap \{g_1(\alpha),\dots,g_c(\alpha)\}\ne\emptyset
$$ 
for any $\alpha\in\F_q$. This is the same as 
$$
(x^q-x)\left | \prod_{1\le i,j\le c} (f_i-g_j).\right.
$$
 Now, let $\alpha$ be a generator of $\F_q^*$, and consider $f_i=x-\alpha^i$, $g_i=-\alpha^i$, for $i=1,\dots, c$. 
It follows that,
$$
\prod_{1\le i,j\le c}(f_i-g_j)=\prod_{1\le i,j\le c}(x-(\alpha^i-\alpha^j))
$$
and by the previous theorem we trivially have
$$
(x^q-x)\left|\prod_{1\le i,j\le c}(f_i-g_j).\right.
$$

But we can make it better.

\begin{teo}\label{teo:factor} Suppose that $q-1=rs$ such that $(r,s)=1$ and suppose $c>\max\{r,s\}$. Then, $[n,k,d]$ Reed Solomon codes with distance $d=q-1$ over $F_{q}$  which is not $c$-SEP.
\end{teo}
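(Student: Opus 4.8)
The plan is to follow the criterion isolated in the proof of Theorem~\ref{teo:34}. Since $d=q-1$ forces $k=2$, every codeword comes from a polynomial of degree at most $1$, and it suffices to produce two \emph{disjoint} families of such codewords $U=\{f_i\}$ and $V=\{g_j\}$ with $|U|,|V|\le c$ for which
$$(x^q-x)\ \Big|\ \prod_{i,j}(f_i-g_j),$$
equivalently for which every $\xi\in\F_q$ is a root of some difference $f_i-g_j$. The point that distinguishes this from Theorem~\ref{teo:34} is that I would realize the roots as \emph{multiplicative quotients} rather than additive differences $\alpha^i-\alpha^j$: if $f=ax$ and $g=b$ is a constant, then $f-g=ax-b$ has its unique root at $b/a$, so covering $\F_q$ reduces to covering $\F_q^*$ by a product set $\{b/a\}$ together with separately hitting $\xi=0$.

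This is exactly where the hypothesis $(r,s)=1$ enters. Let $\alpha$ generate $\F_q^*$ and let $G_r,G_s\le\F_q^*$ be the subgroups of orders $r$ and $s$ (generated by $\alpha^{s}$ and $\alpha^{r}$). Because $\gcd(r,s)=1$ we have $G_r\cap G_s=\{1\}$, and since $|G_r|\,|G_s|=rs=q-1$ this forces the internal direct product decomposition $\F_q^*=G_rG_s$: every nonzero $\xi$ is uniquely $\xi=\mu\nu$ with $\mu\in G_r$, $\nu\in G_s$. I would then set
$$f_\mu=\mu^{-1}x\quad(\mu\in G_r),\qquad g_\nu=\nu\quad(\nu\in G_s),\qquad g_0=0,$$
so that $U=\{f_\mu\}$ consists of $r$ genuine degree-one polynomials and $V=\{g_\nu\}\cup\{g_0\}$ of $s+1$ constants. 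The root of $f_\mu-g_\nu=\mu^{-1}x-\nu$ is precisely $\nu\mu$, and as $(\mu,\nu)$ ranges over $G_r\times G_s$ this sweeps out all of $G_sG_r=\F_q^*$; the remaining point $\xi=0$ is covered because $f_\mu(0)=0=g_0(0)$. Hence every $\xi\in\F_q$ is a root of some $f_\mu-g_\nu$, which is the divisibility criterion above, so $U$ and $V$ are not separated.

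It then remains only to check the bookkeeping. The two families are disjoint as sets of codewords simply by degree (the $f_\mu$ have degree exactly one while the $g_\nu,g_0$ are constant), and within each family the members are distinct. For the sizes, $c>\max\{r,s\}$ gives at once $c>s$, hence $s+1\le c$, and $c>r$; thus $|U|=r\le c$ and $|V|=s+1\le c$. If the definition of $c$-SEP is read with families of size exactly $c$, one pads each family with arbitrary further distinct codewords, which only enlarges the descendant sets and so cannot restore separation. I expect the only genuinely substantive step to be the direct-product identity $\F_q^*=G_rG_s$, that is, the observation that coprimality of the two factors of $q-1$ is exactly what lets a product set of size $rs=q-1$ exhaust the whole multiplicative group; everything else is the degree-one Reed--Solomon dictionary and routine counting. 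This is also what makes the bound better than Theorem~\ref{teo:34}: when $q-1$ splits into coprime factors of size about $\sqrt{q}$ one needs only $c\gtrsim\sqrt{q}$ rather than $c\ge 2q^{3/4}$.
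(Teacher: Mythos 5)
Your proposal is correct and is essentially the paper's own argument: the paper likewise takes linear polynomials whose slopes run over the cyclic subgroup of one coprime order and constants running over the subgroup of the other order, observes that the $rs=q-1$ roots of the differences (the quotients $\alpha^{sj-ri}$) are pairwise distinct by coprimality and hence exhaust $\F_q^*$, and adds one extra polynomial to cover the root $0$. Your only departures are cosmetic --- phrasing the distinctness of the products as the internal direct product $\F_q^*=G_rG_s$ rather than via the explicit exponent computation, and spelling out the disjointness and size bookkeeping that the paper leaves implicit.
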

{\bf Proof.} Let $q-1=rs$ such that $(r,s)=1$, $\alpha$ a generator of $\F_q$ and consider the sets $A=\{1,\alpha^r,\dots, \alpha^{r(s-1)}\}$ and $B=\{1,\alpha^s,\dots, \alpha^{s(r-1)}\}$. Then, all the quotients $a/b$ with $a\in A$ and $b\in B$ are distinct. Indeed, suppose $\alpha^{ri}/\alpha^{sj}=\alpha^{rI}/\alpha^{sJ}$. Then $\alpha^{r(i-I)}/\alpha^{s(J-j)}$, and so $\alpha^{r(i-I)-s(J-j)}=1$ but, since $\alpha$ is a generator,  this is only possible either  if $r(i-I)-s(J-j)=0$ or else if 
 $(q-1)|r(i-I)-s(J-j)$. In any of the two cases, since $r|q-1$, we have $r|s(J-j)$ and since $(r,s)=1$, then $r|(J-j)$ but this is impossible, since $|J-j|<r$, unless $J=j$, and then $i=I$.

 Now, consider polynomials $f_i=\alpha^{ri}x$, $g_j=\alpha^{sj}$ with $0\leq i \leq s-1$ and $0\leq j \leq r-1$. We can do that since $l<c$. By the previous argument, the roots of $f_i-g_j$ are all distinct and we have $rs=q-1$ distinct roots. Since $r<c$ we can just add the root missing by adding a polynomial $g_{r}$.
 
 Observe that, since $c>s$, then $c^2>rs=q-1$, and we can suppose $c^2>q$ since the case $c^2=q$ is already proved. So $[q/c^2]=0$ and the correct distance is $d=q-1$, so we have to consider linear polynomials. (By the theorem of catalan $q^x-p^y=1$ only if $3^2-2^3$) 

\

The previous theorem improves Theorem \ref{teo:34} when $q$ is an even power. Indeed, in the case in which $q=p^{2t}$, then either $(p^t-1)/2$ is odd or $(p^t+1)/2$ is odd. Without loss of generality, assume $(p^t-1)/2$ is odd. Then,   we can take $r=(p^t-1)/2$ and $s=2(p^t+1)$ and so $s\le 4r+4$. Therefore $q-1=rs\le 4r^2+4r$ or $q\le 4r^2+4r+1=(2r+1)^2$ which gives $r\ge(\sqrt q-1)/2$. Then, $q-1=rs\ge  ((\sqrt q-1)/2)s$ which gives $s\le \frac{q-1}{(\sqrt q-1)/2}=2(\sqrt q+1)$. Hence, since Theorem~\ref{teo:factor} assumes $c>s$ then for any $c>2(\sqrt q+1)$ we have that Reed Solomon code with distance $d=q-1$ over $F_{q}$  are not $c$-SEP, improving  Theorem \ref{teo:34}. 

\

In general, the theorem provides a general bound on $c$, depending on the factorization on the exponent. However, in the case of a sophie germain prime, $q-1=2p$ where q and $p$ are primes, then Theorem \ref{teo:factor} only gives $c\ge q/2$.

\section{Conclusion}

The aim of the paper is to find out whether or not there exist values of the minimum distance for which a Reed-Solomon is $c$-SEP but not $c$-TA. We start the presentation by considering a sufficiently small value to the minimum distance. For this much convenient
value, we prove that codes do not posses the frameproof property, let alone the separating one. For the cases $c=2$ and $c=3$, we improve this almost naive result by introducing   to our discourse both polynomial interpolation and   Bezout's identity.  

The approach for case $c=3$ does not generalize to larger values of $c$. In order to deal with the general case, we resort
to the structure of $\F_q$, the finite field over which the code is defined. This allows us to prove an assertion for all $c$,  whenever $q\equiv 1 \pmod{c^2}$.  Along the
same line of reasoning, we provide an alternative proof of existing results by applying an elegant theorem about the generator of the multiplicative group of $\F_q$.

Our presentation  shows that  for the general case, a constructive proof  is by no means trivial. This is because, when using the structure of the field defining the code one can not encircle all cases and  cases without  ``structure'' do not seem to follow any common pattern. So, although the problem is algebraic in nature, it seems that an existence proof should be considered. 

\bibliographystyle{plain}
\bibliography{sepcodes}

\end{document}